\newcommand{\full}{\CIRCLE}
\renewcommand{\partial}{\LEFTcircle}
\newcommand{\none}{\Circle}
\definecolor{jsonbg}{HTML}{F7F9FB}
\definecolor{jsonrule}{HTML}{D0D7DE}
\definecolor{jsonstr}{HTML}{1F5AA6}
\definecolor{jsonnum}{HTML}{116149}
\definecolor{jsonbool}{HTML}{7A2E83}
\definecolor{jsonkey}{HTML}{0E3C79}
\lstdefinelanguage{json-asiaccs}{
  morestring=[b]",
  showstringspaces=false,
  breaklines=true,
  breakatwhitespace=false,
  columns=fullflexible,
  literate=
   *{\\}{{\textbackslash}}{1}
    {/}{{/}}{1}
    {_}{{\_}}{1}
    {-}{{-}}{1}
    {:}{{\color{black}{:}}}{1}
    {,}{{,}}{1}
    {.}{{.}}{1}
    {true}{{{\color{jsonbool}true}}}{4}
    {false}{{{\color{jsonbool}false}}}{5}
    {null}{{{\color{jsonbool}null}}}{4},
}
\lstdefinestyle{asiaccs-json}{
  language=json-asiaccs,
  basicstyle=\ttfamily\footnotesize,
  backgroundcolor=\color{jsonbg},
  frame=single,
  rulecolor=\color{jsonrule},
  framesep=3pt,
  aboveskip=2pt, belowskip=2pt,
  numbers=left,
  numbersep=6pt,
  xleftmargin=8pt, xrightmargin=8pt,
  upquote=true,
  keepspaces=true,
  tabsize=2,
  prebreak=\mbox{\tiny$\hookleftarrow$},
  postbreak=\mbox{\tiny$\hookrightarrow$},
  stringstyle=\color{jsonstr},
}
\theoremstyle{plain}
\newtheorem{theorem}{Theorem}[section]
\newtheorem{corollary}[theorem]{Corollary}
\theoremstyle{definition}
\newtheorem{definition}[theorem]{Definition}
\newtheorem{assumption}[theorem]{Assumption}
\theoremstyle{remark}
\keywords{autonomous agents, host-independent authentication, agent identity (AID), verifiable execution traces (VET), notarized TLS transcripts (Web Proofs), zero-knowledge proofs}
\title[VET Your Agent]{VET Your Agent: Towards Host-Independent Autonomy via Verifiable Execution Traces}
\author{Artem Grigor}
\affiliation{%
  \institution{University of Oxford}
  \city{Oxford}
  \country{United Kingdom}
}
\email{artem.grigor@cs.ox.ac.uk}
\author{Christian Schroeder de Witt}
\affiliation{%
  \institution{University of Oxford}
  \city{Oxford}
  \country{United Kingdom}
}
\email{christian.schroeder@eng.ox.ac.uk}
\author{Simon Birnbach}
\affiliation{%
  \institution{University of Oxford}
  \city{Oxford}
  \country{United Kingdom}
}
\email{simon.birnbach@cs.ox.ac.uk}
\author{Ivan Martinovic}
\affiliation{%
  \institution{University of Oxford}
  \city{Oxford}
  \country{United Kingdom}
}
\email{ivan.martinovic@cs.ox.ac.uk}
\begin{document}
\begin{abstract}
Recent advances in large language models (LLMs) have enabled a new generation of \emph{autonomous agents} that operate over sustained periods and manage sensitive resources on behalf of users. Trusted for their ability to act without direct oversight, such agents are increasingly considered in high-stakes domains including financial management, dispute resolution, and governance. Yet in practice, agents execute on infrastructure controlled by a host, who can tamper with models, inputs, or outputs, undermining any meaningful notion of autonomy.

We address this gap by introducing \emph{VET} (Verifiable Execution Traces), a formal framework that achieves \emph{host-independent authentication} of agent outputs and takes a step toward \emph{host-independent autonomy}. Central to VET is the \emph{Agent Identity Document (AID)}, which specifies an agent’s configuration together with the proof systems required for verification. VET is compositional: it supports multiple proof mechanisms, including trusted hardware, succinct cryptographic proofs, and notarized TLS transcripts (\emph{Web Proofs}).

We implement VET for an API-based LLM agent and evaluate our instantiation on realistic workloads. We find that for today’s black-box, secret-bearing API calls, Web Proofs appear to be the most practical choice, with overhead typically under $3\times$ compared to direct API calls, while for public API calls, a lower-overhead TEE Proxy is often sufficient. As a case study, we deploy a \emph{verifiable trading agent} that produces proofs for each decision and composes Web Proofs with a TEE Proxy. Our results demonstrate that practical, host-agnostic authentication is already possible with current technology, laying the foundation for future systems that achieve full host-independent autonomy.
\end{abstract}

\maketitle

\section{Introduction}
\label{sec:introduction}

Autonomous agents have a long history, from early robotics in the 1960s~\cite{nilsson1984shakey} and text-based systems such as ELIZA~\cite{weizenbaum1966ELIZAComputerPrograma}, to deep reinforcement learning agents mastering ATARI games in the 2010s~\cite{mnih2013PlayingAtariDeep}. Today, large language models (LLMs) have enabled a new wave of autonomous agents. These agents use natural language as an interface to perform tasks such as booking travel, resolving customer service tickets and managing digital assets~\cite{liu2023webarena, backlund2025VendingBenchBenchmarkLongTerm, AIOfficeCFO}. By integrating tools and retrieval, they extend into domains including finance, healthcare, governance, and software development. Advances in reasoning (e.g., chain-of-thought prompting~\cite{wei2022ChainofthoughtPromptingElicits}) and post-training have further increased their capability. Over fifty agentic systems were publicly released in 2024 alone, reflecting both their utility and accelerating adoption~\cite{wang2024SurveyLargeLanguage, casper2025AIAgentIndex}.

As these systems become more capable, a new class of \emph{personal agents} has emerged, where users delegate credentials to agents that act on their behalf across external services~\cite{south2025AuthenticatedDelegationAuthorized, IntroducingOperatorOpenAI}. Such delegation grants access to highly sensitive resources, effectively making the agent a co-inhabitant of the user’s digital life. While self-hosting is possible in principle, in practice most users rely on third-party providers for convenience and scalability. This creates a fundamental mismatch between expectation and reality: a \emph{personal} agent is no longer truly personal once its execution is mediated by an untrusted host. The host infrastructure, which controls the runtime, can silently tamper with outputs, substitute models, or impersonate the agent entirely, turning a trusted assistant into an adversarial insider and breaking the very trust assumptions that “personal” agents imply.

However, the issue extends beyond so-called personal agents. As reliance on delegation grows, agents are increasingly positioned as decision-makers for communities and shared assets. This has given rise to \emph{autonomous agents}, presented as neutral and objective actors in contexts such as arbitration~\cite{Chen2022RobotCourt,Horton2024ForcedRobotArbitration}, portfolio management (e.g., ai16z’s “AI-Agent-led VC”), or public-facing “agent CEOs” such as Truth Terminal~\cite{2024WhatTruthTerminal,WhatAi16zAI16Za}. Unlike personal agents, these systems cannot plausibly be self-hosted by individuals; by design they are operated as independent entities on third-party infrastructure. Yet the same fundamental weakness remains: each agent ultimately runs on a host that can override, impersonate, or selectively disclose its outputs. 

This gap between the \emph{perception of autonomy} and the \emph{reality of host control} has already led to real-world failures. Developers have impersonated their own agents for financial gain~\cite{joyce2024_ai_impersonation}; others allegedly manipulated outputs to advertise tokens or authorize salary payouts~\cite{miyahedge2024_ai_manipulation,andreessen2024_ai_grant}. These incidents demonstrate that when autonomy cannot be independently verified, hosts retain the power to subvert agents, and will act on that, when there are no checks in place, leaving trust in such systems inherently fragile.

\paragraph{Existing approaches.}
In response to this, both industry and academia have attempted to define how to authenticate agent outputs~\cite{chan2024VisibilityAIAgents, MCPIDocumentation, AuthGenAIAuth0, 2025Agent2AgentProtocolA2A}. Most proposals, however, rely on informal specifications or implicitly assume that the host is honest. Others take steps toward host-independence but anchor trust in a single mechanism such as TEEs~\cite{malhotra2024SettingYourPet,LaunchElizaV2,IntroducingWT3Trustless}. While these efforts are valuable, they remain narrow in scope: none provide a general framework that can accommodate diverse proof systems and still guarantee that an agent’s behavior is verifiable regardless of who controls the host. As a result, the central challenge of establishing \emph{host-independent autonomy} remains unresolved.

\begin{table}[t]
\centering
\small
\setlength{\tabcolsep}{6pt}
\renewcommand{\arraystretch}{1.1}
\begin{tabular}{p{3.5cm} c c p{4cm}}
\toprule
\textbf{Scheme} & \textbf{Host-Independent} & \textbf{Formalized} \\
\midrule
AI IDs (Chan et al.)~\cite{chan2024VisibilityAIAgents} & \none & \none  \\
MCP-Identity / A2A~\cite{MCPIDocumentation,2025Agent2AgentProtocolA2A} & \none & \partial  \\
Auth for GenAI~\cite{AuthGenAIAuth0} & \none & \none  \\
Delegation to Agents~\cite{south2025AuthenticatedDelegationAuthorized} & \none & \partial  \\
PET Rock~\cite{malhotra2024SettingYourPet} & \partial & \none   \\
\textbf{VET (this work)} & \full & \full  \\
\bottomrule
\end{tabular}
\caption{Comparison of agent identity/authentication schemes. Circles denote: \full~addressed, \partial~partially addressed, \none~not addressed.}
\label{tab:identity_schemes}
\end{table}

In this paper, we propose \emph{VET} (Verifiable Execution Traces), a framework for authenticating agents and taking a first step toward \emph{host-independent autonomy}. VET introduces a compositional authentication layer that (i) specifies an agent’s configuration and trust assumptions through an \emph{Agent Identity Document (AID)}, (ii) binds each output to a verifiable execution trace consistent with that AID, and (iii) abstracts deployment-specific details by modelling the agent as a set of verifiable components, each equipped with its own prover–verifier pair. This structure allows diverse verification techniques to be incorporated interchangeably within a single, unified identity, and shifts the trust anchor from \emph{who hosts the agent} to \emph{what configuration and proofs define the agent}.

We then focus on how to practically apply the VET framework to today’s agents. We observe that many deployed agent systems rely on \emph{third-party, black-box inference and tool APIs} while executing only relatively lightweight orchestration code on an agent's maintainer (host), as seen both in public-facing autonomous agents~\cite{2024WhatTruthTerminal,WhatAi16zAI16Za} and in competitive environments such as Alpha Arena~\cite{alpha_arena_2025}. In this deployment model, inference and tool APIs are consumed as black-box services and are therefore treated as trusted components of the system. The orchestration layer (prompting, tool wiring, and glue code) executed on the host then often constitutes the primary attack surface, as exemplified by recent industrial efforts on agent autonomy~\cite{malhotra2024SettingYourPet,phala2025_build_trustworthy_fintech_ai_agents}.

After considering multiple candidate component proof systems for this context, we identify notarized TLS transcript proofs (\emph{Web Proofs}) as a suitable choice for secret-bearing API components accessed over TLS (i.e., HTTPS), and TEE Proxies as a practical alternative for low-sensitivity or public-data APIs (e.g., price feeds). Both can be deployed without modifications to the API infrastructure and have overheads that scale with transcript size rather than model complexity. At the same time, we highlight that other proof systems are well suited to complementary settings, such as TEEs or ZKML for fully local or self-hosted agents where the computation execution itself must be authenticated.

Finally, we evaluate the resulting authentication overhead on realistic workloads of API-based LLM agent components, comparing Web Proofs against a no-proof baseline and a TEE Proxy deployment. Our results show that for secret-bearing calls where transcript privacy is beneficial, Web Proofs achieve authentication with modest overhead, typically below $3\times$ relative to direct API calls, while our optimized channel strategy amortizes handshake costs and supports multi-message sessions, with further optimizations in sight. Beyond benchmarks, we present \emph{VeriTrade}, a trading agent that makes periodic authenticated portfolio decisions, demonstrating how VET enables host-agnostic authentication in a realistic financial application and how different component proof systems can be composed in a single AID.

Together, these contributions establish both the formal basis and the practical feasibility of agent authentication. The remainder of the paper is organized as follows.

\paragraph{Paper organization.}
Section~\ref{sec:background} reviews identity, verifiable computation, and prior authentication approaches. Section~\ref{sec:threatmodel} defines our threat model and formalizes the distinction between authentication and autonomy. Section~\ref{sec:preliminaries} introduces necessary preliminaries. Section~\ref{sec:agent-authentication-framework} presents the VET framework and the Agent Identity Document abstraction. Section~\ref{sec:webproofs} details our Web Proofs instantiation, Section~\ref{sec:evaluation} reports benchmarks, and Section~\ref{sec:case-study} describes a real-world case study. Finally, Section~\ref{sec:limitations_and_future_work} discusses limitations and future work, and Section~\ref{sec:conclusion} concludes.

\section{Background}
\label{sec:background}

\subsection{Identity and Authentication for Software Programs}
\label{subsec:identity-for-software-programs}

Traditionally, software identity has been established through mechanisms such as code signing, package registries, notarized binaries, and more recently supply-chain attestations~\cite{anderson2020SecurityEngineering}. These methods bind artifacts to publishers before execution and attest to origin, but they do not guarantee that outputs reflect the intended program once execution begins.

That is what authentication of software programs seeks to establish. It ensures that the outputs of an execution correspond to the program’s declared identity. In practice, however, most approaches remain \emph{host-centric}: the platform is assumed to be in a trusted state and relied upon to report faithfully. Examples include secure boot and TPM-based remote attestation of measured boot or loaded binaries, as well as servers presenting endpoint certificates to authenticate TLS endpoints~\cite{lampson1992AuthenticationDistributedSystems, arbaugh1997SecureReliableBootstrap, sailer2004DesignImplementationTCGbased}. These mechanisms authenticate the host’s claims about code and configuration, but not the computation itself. If the host is malicious, it can still substitute programs, tamper with inputs, or forge outputs.

For autonomous agents this limitation is fundamental. It is not enough to authenticate \emph{who runs} a program; we must also authenticate \emph{what was executed} and bind each output to the declared configuration. This motivates approaches that shift authentication away from trusting the host and toward direct verification of computation, as we discuss next.

\subsection{Verifiable Computation for Software Authentication}
\label{subsec:verifiable-computation-for-software-execution}
Verifiable computation (VC) addresses whether a claimed output corresponds to the intended computation. The central idea is that a prover convinces a verifier that execution was correct without re-running the program. This concept originates from interactive proofs~\cite{goldwasser1985KnowledgeComplexityInteractive, babai1985TradingGroupTheory} and probabilistically checkable proofs (PCPs)~\cite{arora1998ProofVerificationHardness}, which established that any computation can be verified with limited checks.

Three main approaches to VC are widely studied.  
\emph{Succinct cryptographic proofs} such as Pinocchio, Groth16, PLONK, and STARKs~\cite{gennaro2013QuadraticSpanPrograms, groth2016SizePairingBasedNoninteractive, ben-sasson2018ScalableTransparentPostquantum, gabizon2019PLONKPermutationsLagrangebasesa} allow compact, efficiently verifiable proofs, optionally with zero-knowledge.  
\emph{Hardware-assisted attestation} uses trusted execution environments (TEEs) to prove that a binary executed on an untampered platform~\cite{costan2016IntelSGXExplained, schneider2022SoKHardwaresupportedTrusted}.  
\emph{Consensus-based and optimistic re-execution} replicate computation across committees or dispute-resolve on demand, providing integrity under majority or economic assumptions~\cite{bano2019SoKConsensusAge, conway2024OpMLOptimisticMachine}.

However, each approach falls short for autonomous agents: VC authenticates a \emph{single computation of a fixed program}, whereas agents are dynamic, multi-step systems that combine models, tools, and sometimes humans. This motivates a compositional framework that captures identity at the agent level, with VC techniques functioning as components of a complete solution that the framework provides.

\subsection{Agent Identity and Authentication}\label{subsec:agent-identity-and-authentication-protocols}
The question of how to identify an autonomous agent is newer than for conventional software and remains under-defined. Early provenance mechanisms proposed identifiers for model outputs~\cite{chan2024IDsAISystems}, linking generated content to the underlying model. These improved visibility but left security out of scope: impersonation and tampering were not addressed. The idea was later extended to visible \emph{AI IDs} for agents~\cite{chan2024VisibilityAIAgents}, intended to label agent-produced content in online ecosystems. However, these identifiers were voluntary and lacked both a formal threat model and security guarantees.

Industry efforts have also attempted to operationalize agent authentication. MCP-Identity~\cite{MCPIDocumentation}, Auth for GenAI~\cite{AuthGenAIAuth0}, and Google’s Agent-to-Agent (A2A) framework~\cite{2025Agent2AgentProtocolA2A} attach cryptographic credentials to service endpoints and describe capabilities via “agent cards.” In practice, these provide workflow primitives for authentication, deauthentication, and credential sharing, but all assume an honest host. Identities are thus fragile to host migration, and none of these approaches bind an agent’s declared configuration to the execution trace it produces.

Delegation highlights this gap further. South et al.~\cite{south2025AuthenticatedDelegationAuthorized} proposed delegated credentials for personal agents acting across services. Yet neither this nor prior work ensures that actions under delegation were genuinely produced by the agent rather than forged by the host.

Several experimental systems attempted to strengthen identity with verifiable computation. The PET Rock prototype~\cite{malhotra2024SettingYourPet} demonstrated enclave-backed agents but remained limited in scope. Other initiatives (e.g., ElizaOS on Phala TEEs~\cite{FeatAddedNew}) focused on feasibility rather than formal identity definitions.

\subsection{Agent Autonomy}
\label{subsec:agent-autonomy}

Unlike software identity and verifiable computation, the notion of \emph{autonomy} lacks a precise formalization. In industry usage, an agent is “autonomous” if it can reason and act without direct human oversight, and if no external party is assumed to significantly influence its decision process. At minimum, meaningful autonomy requires independence from the host: if the host operator can override, impersonate, or manipulate outputs, then autonomy is nominal rather than real. To make autonomy substantive in settings where assets are at stake, one must be able to \emph{authenticate} that observed actions indeed originate from the declared agent and not its host.

\begin{table}[t]
\centering
\footnotesize
\setlength{\tabcolsep}{6pt}
\renewcommand{\arraystretch}{1.1}
\begin{tabular}{l c p{3.5cm}} 
\toprule
\textbf{Platform} & \textbf{Host-Independent} & \textbf{Details} \\
\midrule
ElizaOS~\cite{2025ElizaOSEliza} & \partial & Phala TEE; zkTLS adapter \\
GAME\tablefootnote{\url{https://app.virtuals.io}} & \none  & Operator controls execution \\
0G\tablefootnote{\url{https://0g.ai}} & \partial & Consensus inference; early stage \\
FetchAI\tablefootnote{\url{https://fetch.ai}} & \none & Operator controls execution \\
SingularityNET\tablefootnote{\url{https://github.com/singnet}} & \none & Operator controls execution \\
Olas\tablefootnote{\url{https://olas.network}} & \partial & Operators with optional consensus \\
Cainam Ventures\tablefootnote{\url{https://www.cainamventures.com}} & \none & Operator controls execution \\
\bottomrule
\end{tabular}
\caption{“Autonomous agent” platforms. All emphasize deployment and monetization, but none provide host-independent or formally defined autonomy. Circles denote: \full~addressed, \partial~partially, \none~not.}
\label{tab:agent_kits}
\end{table}

Despite this, many frameworks market themselves as supporting “autonomous agents.” They emphasize deployment and monetization while leaving host trust largely unresolved. ElizaOS~\cite{2025ElizaOSEliza}, widely used in Web3, exemplifies this: it remains the default toolkit despite a trust model that requires to trust the host, and only some experiments using Phala TEEs integration and experimental zkTLS adapter~\cite{FeatAddedNew, LaunchElizaV2}. \emph{GAME} supports token-based co-ownership but leaves execution under operator control. \emph{0G} introduces consensus-backed inference, but adoption is nascent and costs remain high~\cite{fernandez-becerra2024EnhancingTrustAutonomous}. Marketplaces such as \emph{FetchAI}, \emph{SingularityNET}, and \emph{AgentVerse} let developers publish agents, but users must ultimately trust the operator. \emph{Olas} introduces “operators” with optional consensus gadgets, though practical deployments are rare. Newer players such as \emph{Cainam Ventures} emphasize monetization while leaving infrastructure trust explicitly out of scope.

\paragraph{Summary.}
In summary, today’s so-called \emph{autonomous agents} are autonomous only in name: their execution remains subject to host control, and their autonomy cannot be independently verified. Existing approaches to software identity and verifiable computation provide important building blocks but do not offer a general, plug-and-play solution for agents. This motivates the focus of this work on \emph{host-independent authentication}, a property where an agent’s outputs can be authenticated against its declared configuration and execution trace, regardless of the host. In this paper we take a first step by proposing a framework that achieves host-independent authentication of agent outputs, moving closer toward full host-independent autonomy.

\section{Threat Model}
\label{sec:threatmodel}

\subsection{System Model}
\label{subsec:system-model}

\begin{figure}[h]
  \centering
  \includegraphics[width=0.8\linewidth]{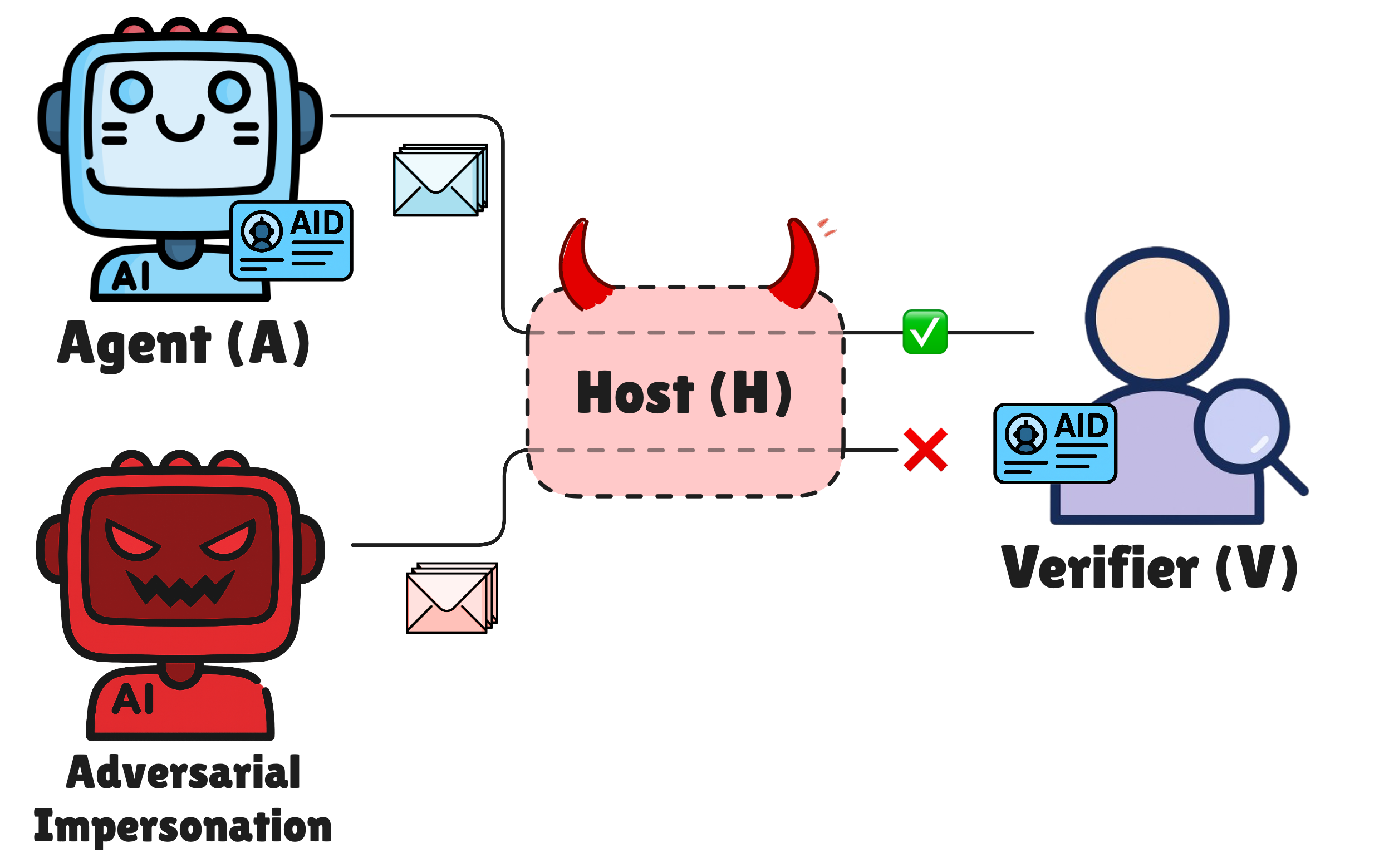}
  \vspace{-0.8em}
  \caption{System model. The verifier, given an AID, checks that an output is consistent with the declared agent configuration, even if the host controlling execution is malicious.}
  \Description{Diagram showing a verifier with an AID checking outputs against an agent configuration, despite a potentially malicious host.}
  \label{fig:threat-model}
\end{figure}

\begin{enumerate}[leftmargin=1.2em, label=\textbf{(\arabic*)}]
    \item \textbf{Autonomous Agent $(A)$:}
    A software entity uniquely described by an Agent Identity Document (AID). $A$ processes inputs and produces outputs consistent with its declared configuration.

    \item \textbf{Host $(H)$:}
    Controls the runtime environment of $A$. We treat $H$ as the primary adversary: fully malicious but computationally bounded. $H$ may replace models, tamper with inputs, fabricate or suppress outputs, or selectively disclose results (“cherry-picking”). It cannot break standard cryptography or compromise external components assumed correct (see Assumption~\ref{assumption:component-integrity}).

    \item \textbf{Verifier $(V)$:}
    Executes the verification protocol given an output and agent's AID. $V$ is assumed to follow the protocol faithfully but may be \emph{honest-but-curious}, attempting to extract sensitive information used by host to run the agent (e.g., API keys, proprietary prompts). Privacy-preserving design is therefore a primary requirement.
\end{enumerate}

\subsection{Security Goals}
\label{subsec:security-goals}

The framework addresses the problem of authenticating agent outputs in the presence of a malicious host. We distinguish two related objectives:

\paragraph{(1) Authentication of outputs (achieved).}
Given an AID, a verifier can check that an observed output corresponds to an execution of agent $A$ consistent with its declared configuration. We require:
\begin{itemize}[leftmargin=1.2em]
    \item \textbf{Soundness:} A malicious host $H$ cannot convince $V$ to accept an output–proof pair unless the output is consistent with the configuration in the AID.
    \item \textbf{Completeness:} For an honest execution of $A$, $V$ will accept the output–proof pair as valid.
    \item \textbf{Privacy:} Proofs reveal only what is strictly necessary. Leakage of secrets (e.g., API keys, prompts) is mitigated through selective disclosure or full zero-knowledge property.
\end{itemize}

\paragraph{(2) Host-independent autonomy (aspirational).}
Authentication alone ensures that individual outputs are genuine but does not protect against the host-level bias. We define \emph{host-independent autonomy} as the property that, for every execution trace, all outputs observable by a verifier can be authenticated against the AID, and $H$ cannot significantly influence $A$’s decisions by covert means (timing manipulation, suppression, or side-channel attacks). Our framework does not yet achieve this stronger property. Instead, we view authentication as a necessary first step. Achieving full host-independent autonomy remains future work.

\subsection{Security Assumptions}
\label{subsec:security-assumptions}

\begin{assumption}[Secure Primitives]
Standard primitives (e.g., digital signatures, hash functions) are secure against both $H$ and $V$.
\end{assumption}

\begin{assumption}[Component Integrity]
\label{assumption:component-integrity}
Each external component listed in the AID (e.g., APIs, oracles, LLM services) behaves according to its specified interface. Handling faulty components or collusion between $H$ or $V$ and such components is out of scope.
\end{assumption}

\begin{assumption}[Denial-of-Service]
We do not guarantee liveness. $H$ may suppress or delay outputs and proofs. Our guarantees apply only to results that are eventually delivered.
\end{assumption}

\begin{assumption}[Host Timing Advantage]
$H$ may observe, delay, or reorder outputs before forwarding them to $V$.
\end{assumption}

\section{Preliminaries}
\label{sec:preliminaries}

To reason about authentication, we require a minimal but formal definition of an LLM-based agent. While industry descriptions (e.g., OpenAI’s “LLM configured with instructions and tools”\footnote{\url{https://openai.github.io/openai-agents-python/agents/}}; Anthropic’s tool-using agents~\cite{anthropic2025IntroducingModelContext}) emphasize capabilities, they lack the formality needed for security analysis. We adopt the following abstraction, consistent with classic Sense–Think–Act models~\cite{wooldridge1995intelligent}.

Let $\Sigma^\ast$ be a shared finite string-based domain (e.g., UTF-8).

\begin{definition}[Tool]
    \label{def:tool}
A \emph{tool} is a named function
\[
   \mathsf{tool}_t : \Sigma^\ast \to \Sigma^\ast
\]
indexed by an identifier $t \in \mathcal{V}_{\mathrm{tool}}$, where $\mathcal{V}_{\mathrm{tool}}$ is a fixed vocabulary enumerating all recognized tools. Tools represent external APIs, local modules, or other services the agent may invoke.
\end{definition}

\begin{definition}[Core]
    \label{def:core}
The \emph{core} of an agent is a (probabilistic) function
\[
   \mathsf{Core} : \Sigma^\ast \to \Sigma^\ast \times \mathcal{P}(\mathcal{V}_{\mathrm{tool}}\times\Sigma^\ast),
\]
which, given a transcript $h$, produces (i) an output string $y$ and (ii) a set of tool invocations $(t,x)$.
\end{definition}

\begin{definition}[LLM-based Agent]
An \emph{agent} is a pair $\mathcal{A}=(\mathsf{Core}, \mathcal{T})$, where $\mathsf{Core}$ is a reasoning LLM and $\mathcal{T}$ is a set of authorized tools. The agent executes iteratively: at each step, the core processes the transcript, emits $y$, and invokes tools from $\mathcal{T}$, whose results are appended to the transcript.
\end{definition}

\begin{definition}[Execution Trace]
    \label{def:trace}
An \emph{execution trace} $\tau$ of agent $\mathcal{A}$ is the finite sequence of steps
\[
   \tau = \bigl( (y^{(0)}, T^{(0)}), (y^{(1)}, T^{(1)}), \dots, (y^{(n)}, T^{(n)}) \bigr),
\]
where each $T^{(j)} = \{(t_i, x_i, r_i)\}$ records the tool calls and results made at step $j$. Authentication later binds outputs $y^{(j)}$ (and inputs $x_i$) to a valid trace under $\mathcal{A}$.
\end{definition}

\begin{figure}[ht]
    \centering
    \includegraphics[width=0.82\linewidth]{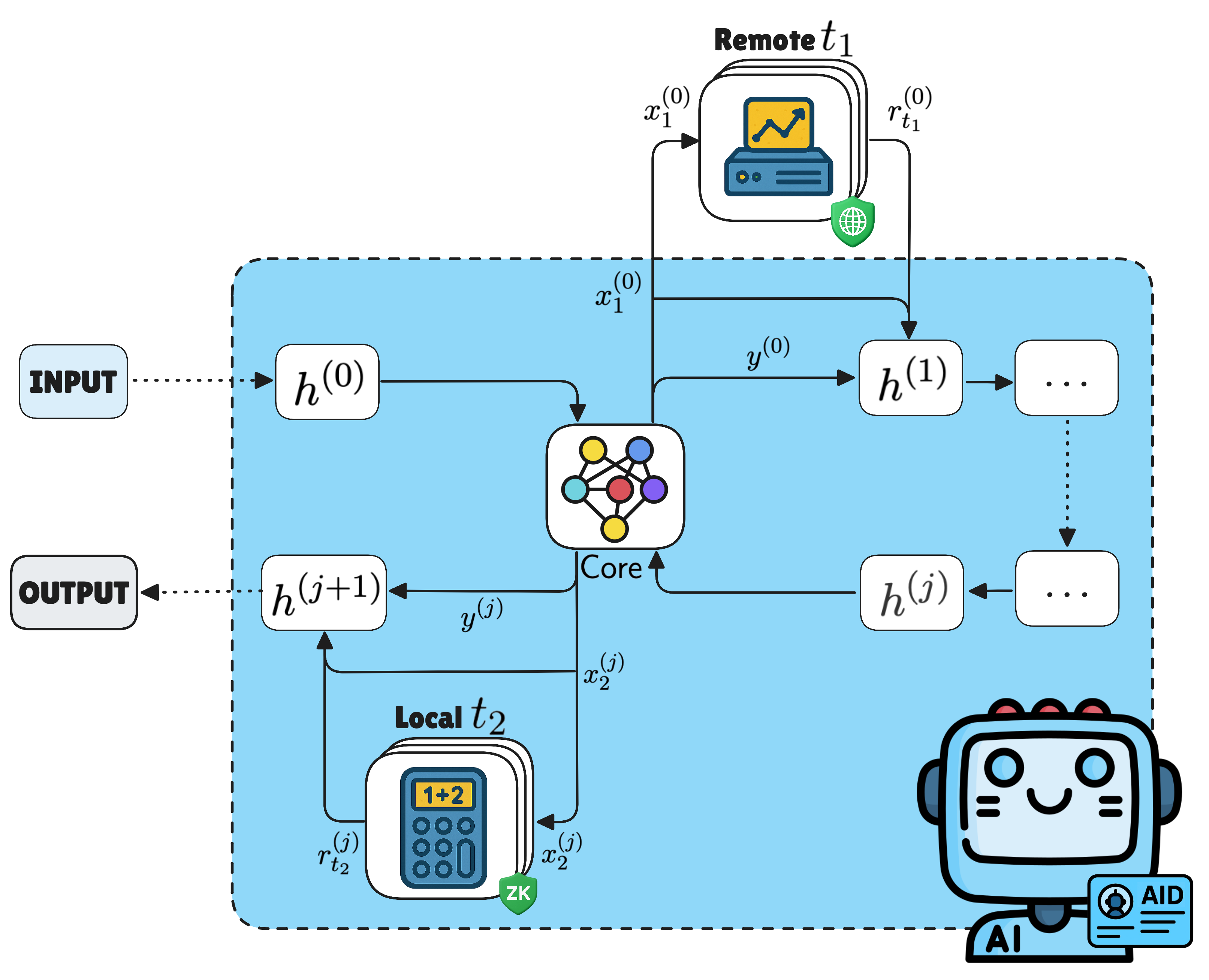}
    \caption{Illustration of the agent's execution loop. The agent's cognitive $\mathsf{Core}$ iteratively processes execution transcripts $h^{(j)}$, produces plaintext outputs $y^{(j)}$, invokes specified tools ($t_1$, $t_2$), and incorporates their responses ($r_{t_i}$) into subsequent execution transcripts.}
    \Description{Flow diagram of an agent execution loop: the core processes transcripts, outputs plaintext, invokes tools, and feeds tool responses back into the transcript for the next iteration.}
    \label{fig:agent-execution}
\end{figure}
\section{Agent Authentication Framework}
\label{sec:agent-authentication-framework}

\newcommand{\Tool}{\textsf{Tool}}
\newcommand{\Component}{\textsf{Component}} 
\newcommand{\Core}{\textsf{Core}}
\newcommand{\negl}{\mathsf{negl}}
\newcommand{\Adv}{\textsf{Adv}}
\newcommand{\A}{\textsf{A}}
\newcommand{\Prove}{\mathsf{Prove_{(\mathcal{A},\Adv_\A)}}}
\newcommand{\Verify}{\mathsf{Verify_{(\mathcal{A},\Adv_\A)}}}
\newcommand{\OutGeneral}{\mathrm{Out}(\mathinner{\ldotp\!\ldotp\!\ldotp},\tau)}
\newcommand{\AID}{\textsf{AID}}
\newcommand{\Hash}{\textsf{Hash}}
\newcommand{\serialize}{\textsf{serialize}}
\subsection{Formal Model of Agent Authentication}
\label{subsec:formal-model}

Having formalized an agent $\mathcal{A}=(\mathsf{Core},\mathcal{T})$ and its execution traces (Definition~\ref{def:trace}), we now define what it means to \emph{authenticate} an output. Intuitively, a verifier should accept a candidate output $m \in \Sigma^\ast$ if and only if there exists a valid execution trace of $\mathcal{A}$ in which $m$ appears either as a plaintext output or as a tool input that must be authenticated by the corresponding tool provider.\footnote{We require that $m$ be generated by an actual run of $\mathcal{A}$, not merely by constructing a plausible-looking trace offline. See~\cite{wolf2024FundamentalLimitationsAlignment} for discussion of steering LLMs toward arbitrary outputs.}

\paragraph{Language of authentic outputs.}
Let $\mathrm{ValidTrace}_{\mathcal{A}}(\tau)$ be a polynomial-time predicate that checks whether $\tau$ is a valid execution trace of $\mathcal{A}$. Define two selectors:
\[
\mathsf{Out}_p(j,\tau) = y^{(j)} \quad\text{and}\quad
\mathsf{Out}_t(j,i,\tau) = x_i^{(j)}.
\]
The language of authentic outputs for $\mathcal{A}$ is:
\[
\mathcal{L}_{\mathcal{A}} =
\bigl\{ m \in \Sigma^\ast \,\big|\,
  \exists \tau :
  \begin{aligned}[t]
    & \mathrm{ValidTrace}_{\mathcal{A}}(\tau) = 1 \\
    & \wedge\; m \in \{\mathsf{Out}_p(j,\tau), \mathsf{Out}_t(j,i,\tau)\}
  \end{aligned}
\bigr\}.
\]

\begin{definition}[Agent-Authentication Scheme]
\label{def:auth-scheme}
For a fixed agent $\mathcal{A}$ and language $\mathcal{L}_{\mathcal{A}}$, an authentication scheme consists of two PPT algorithms:
\[
\begin{aligned}
 \Prove(1^\lambda,m,\tau) & \to \pi
\\
 \Verify(1^\lambda,m,\pi) & \to \{0,1\},
\end{aligned}
\]
where $\pi$ is a proof that $m \in \mathcal{L}_{\mathcal{A}}$ relative to trace $\tau$, and $\lambda$ is the security parameter, which describes the adversarial running time and negligible bounds.
\end{definition}

\begin{definition}[Completeness]
\label{def:completeness}
The scheme is \emph{complete} if for every valid trace $\tau$ and every $m \in \mathcal{L}_{\mathcal{A}}$, we have
\[
\Pr\bigl[\Verify(1^\lambda,m,\Prove(1^\lambda,m,\tau))=1\bigr] = 1.
\]
\end{definition}

\begin{definition}[Soundness]
\label{def:soundness}
The scheme is \emph{sound} if no PPT adversary $H^\ast$ can cause the verifier to accept an inauthentic output with non-negligible probability. Formally:
\[
\begin{aligned}
\Pr\bigl[
   (m,\pi)\leftarrow H^\ast(1^\lambda) :
   m \notin \mathcal{L}_{\mathcal{A}}
   \wedge \Verify(1^\lambda,m,&\pi)=1
\bigr]
 \\
& \le \negl(\lambda)\footnotemark .
\end{aligned}
\]
\end{definition}
\footnotetext{$\negl(\lambda)$ denotes a negligible function: for every polynomial $p(\lambda)$ there exists $\lambda_0$ such that for all $\lambda > \lambda_0$, $\negl(\lambda) < \tfrac{1}{p(\lambda)}$.}

\paragraph{Privacy.}
Execution traces often contain sensitive information irrelevant to authentication. We therefore require that proofs reveal only what is strictly necessary for verifying an output.

\begin{definition}[Minimal Disclosure]
\label{def:minimal-disclosure}
An authentication scheme satisfies \emph{minimal disclosure of information $\sigma$} if, for a specific sensitive value $\sigma$ and any valid proof $\pi$ of $m$, the verifier cannot learn $\sigma$ from $\pi$ beyond what is already derivable from $(AID,m)$. In other words, the scheme allows one to prove that an output $m$ is valid while hiding designated parts of the underlying execution trace.
\end{definition}

The strongest notion is zero-knowledge, where proofs hide \emph{all} information about the execution trace beyond the validity of $m$.

\begin{definition}[Zero-Knowledge]
\label{def:zk}
An authentication scheme is \emph{zero-knowledge} if the verifier learns nothing from a proof beyond the fact that $m$ is a valid output of $\mathcal{A}$. Formally, this requires the existence of a PPT simulator $\mathcal{S}$ such that for every valid $m \in \mathcal{L}_{\mathcal{A}}$, the distributions
\[
\pi \gets \Prove(1^\lambda,m,\tau)
\quad\text{and}\quad
\pi' \gets \mathcal{S}(1^\lambda,m)
\]
are computationally indistinguishable to any PPT adversary.
\end{definition}

\paragraph{Summary.} Together, Definitions~\ref{def:completeness}–\ref{def:zk} formalize our security goals: completeness (honest runs succeed), soundness (forgery is infeasible), and privacy (proofs either hide sensitive information or further reveal nothing beyond authenticity).
\subsection{Compositional Agent Authentication}
\label{subsec:comp-auth}

Having defined the authentication framework, we now turn to its instantiation. Our approach is to compose proofs of correctness for each agent component, the cognitive \Core{} and its authorized \Tool{}s, into an overall authentication proof. This bottom-up construction reduces authentication of the entire agent to authentication of its parts, and naturally aligns with the modular definition of agents introduced earlier.

\paragraph{Component proof systems.}
For each tool identifier $t \in \mathcal{V}_{\mathrm{tool}}$, define the relation
\[
R_t = \{(x,r) \mid r = \mathsf{tool}_t(x)\}.
\]
For the cognitive core:
\[
R_{\Core} = \{(h,(y,T)) \mid (y,T) = \Core(h)\}.
\]
A \emph{component proof system} for relation $R$ is a pair of PPT algorithms $(P,V)$ with perfect completeness and $\negl(\lambda)$-soundness against adversaries $\Adv_R$.

\paragraph{Compositional construction.}
Given a valid trace $\tau$, the prover $\Prove_\A$ generates component proofs $\pi_{\Core},\pi_t$ for each step and packages them into a composite proof $\pi$. The verifier $\Verify$ checks (i) that every component sub-proof is valid and (ii) that the reconstructed transcript is consistent. If both hold and the claimed message $m$ appears as an output in the trace, $\Verify$ accepts.

\begin{theorem}[Composition]
\label{thm:composition}
If each component proof system $(P,V)$ is complete and sound against its adversary class, then the compositional scheme $(\Prove_\A,\Verify)$ is complete and sound against
\[
\Adv_\A = \Adv_{\Core} \cap \bigcap_{t\in\mathcal{T}} \Adv_t.
\]
\end{theorem}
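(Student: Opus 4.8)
The plan is to establish the two properties separately, treating completeness as essentially routine and concentrating the effort on soundness through a reduction to the component proof systems.

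For \textbf{completeness}, I would begin from an honest execution of $\mathcal{A}$ that yields a valid trace $\tau$ with $\mathrm{ValidTrace}_{\mathcal{A}}(\tau)=1$. By construction each step contributes a core statement $(h,(y,T))\in R_{\Core}$ and, for every invoked tool, a statement $(x,r)\in R_t$; since $\Prove_\A$ merely runs the honest component provers on these true statements, the perfect completeness of each $(P,V)$ guarantees that every sub-proof verifies. It then remains to observe that the transcript-consistency check performed by $\Verify$ is, by definition, exactly the chaining condition embedded in $\mathrm{ValidTrace}_{\mathcal{A}}$ (the tool results and outputs of step $j$ feed into $h^{(j+1)}$), so it passes on the honestly produced $\tau$. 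Hence $\Verify$ accepts with probability $1$.

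For \textbf{soundness}, I would argue by reduction. The central observation is a decomposition lemma: if, for a verifying composite proof, every core sub-proof certifies a pair in $R_{\Core}$, every tool sub-proof certifies a pair in $R_t$, and the transcript-consistency check holds, then the reconstructed sequence $\tau$ satisfies $\mathrm{ValidTrace}_{\mathcal{A}}(\tau)=1$, so any $m$ appearing as $\mathsf{Out}_p$ or $\mathsf{Out}_t$ in $\tau$ lies in $\mathcal{L}_{\mathcal{A}}$. Contrapositively, if $H^\ast$ outputs $(m,\pi)$ with $m\notin\mathcal{L}_{\mathcal{A}}$ yet $\Verify$ accepts, then at least one component sub-proof must verify on a statement \emph{outside} its relation. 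I would then invoke a standard guessing/union-bound argument: since a trace has polynomially many steps and each step invokes at most $|\mathcal{T}|$ tools, there are polynomially many sub-proofs, so guessing which one is the cheating proof and forwarding it yields a single-component adversary whose success probability is the composite success $\varepsilon$ divided by the number of components, which is still non-negligible whenever $\varepsilon$ is.

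The role of the intersection $\Adv_\A=\Adv_{\Core}\cap\bigcap_{t\in\mathcal{T}}\Adv_t$ is precisely to make this last step go through: because $H^\ast\in\Adv_\A$ it is simultaneously admissible against every component, so the extracted reduction adversary lies in the relevant class $\Adv_{\Core}$ or $\Adv_t$ and therefore contradicts that component's $\negl(\lambda)$-soundness. The step I expect to be the main obstacle is formalizing the decomposition lemma, i.e. pinning down the transcript-consistency predicate checked by $\Verify$ so that ``all component relations hold and the transcript is consistent'' is provably equivalent to $\mathrm{ValidTrace}_{\mathcal{A}}(\tau)=1$. A secondary subtlety, flagged by the footnote to the definition of $\mathcal{L}_{\mathcal{A}}$, is ensuring that $R_{\Core}$ genuinely binds each $(y,T)$ to a real evaluation of $\Core$ rather than an offline-fabricated one; this is exactly what component soundness for $R_{\Core}$ supplies, so no argument beyond the reduction is needed once the lemma is in place.
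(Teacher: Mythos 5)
Your proposal is correct and follows essentially the same route as the paper's (much terser) proof sketch: completeness by perfect completeness of the honest component provers plus the transcript-consistency check, and soundness by contradiction, locating a forged sub-proof and applying a union bound over the polynomially many sub-proofs in a bounded-length trace. Your added detail --- the explicit decomposition lemma equating ``all sub-proofs verify $+$ transcript consistency'' with $\mathrm{ValidTrace}_{\mathcal{A}}(\tau)=1$, the guessing reduction, and the observation that the intersection $\Adv_{\A}$ is what makes the extracted adversary admissible against each component --- merely fills in steps the paper leaves implicit.
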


\begin{proof}[Proof sketch] 
Completeness follows directly: if all component proofs are valid under honest execution, then $\Verify$ accepts. Soundness is by contradiction: if $\Verify$ accepts an inauthentic output, given the transcript consistency, some sub-proof must have been forged, contradicting the soundness of the corresponding component. Since the transcript has bounded length, applying a union bound shows that the overall forgery probability remains negligible when the individual error bounds are combined.

\end{proof}

\paragraph{Privacy.} 
In our compositional construction, privacy properties are inherited from the underlying component proof systems. If a component proof supports minimal disclosure or zero-knowledge, then the overall scheme achieves the same for that component’s contribution to the execution trace. In particular, minimal disclosure can be obtained by selectively hiding sensitive steps (e.g., API keys or prompts) whenever the component proof allows it. Stronger guarantees are possible by employing recursive proof techniques, which compress multiple execution steps into a single higher-level proof. This enables entire segments of a trace to be hidden while still proving global consistency, often with lower marginal cost than proving each step directly (e.g., folding SNARKs over another verifiable proofs). While we do not develop these recursive constructions further here, they provide a path toward achieving full zero-knowledge guarantees over complete execution traces when desired.
\subsection{Agent Identity Document (AID)}
\label{subsec:agent-identity}

The composition theorem (Thm.~\ref{thm:composition}) shows how to build an authentication scheme for any agent $\A=(\Core,\mathcal{T})$. To make this usable in practice, however, we require a stable, verifiable object that acts as a reference point for authentication. We call this the \emph{Agent Identity Document (AID)}. The AID plays the role of a “passport” for agents: it not only describes their configuration, but also binds that description to verifiable metadata so that third parties can independently check outputs against the declared identity. Without such a document, proofs of individual components could not be tied back to a unique agent, and authentication would collapse into ad hoc verification.

\begin{definition}[Agent Identity Document]
\label{def:agent-vid}
An \emph{AID} for an agent $\A=(\Core,\mathcal{T})$ is a tuple
\[
\AID = \bigl((\hat{\Core}, \hat{V}_{\Core}),\, \{(\hat{t}, \hat{V}_t)\}_{t \in \mathcal{T}}\bigr),
\]
where each component is specified by descriptive metadata ($\hat{\Core},\hat{t}$) and corresponding verification metadata ($\hat{V}_{\Core},\hat{V}_t$) that deterministically instantiate its proof system verifier.
The agent’s \emph{ID} is the collision-resistant hash of the serialized AID:
\[
  \mathsf{ID}_{\A} \coloneqq \Hash(\serialize(\AID)).
\]
\end{definition}

\paragraph{Relation to Agent Cards.}
The AID generalizes prior “Agent Cards”~\cite{2025Agent2AgentProtocolA2A,chan2024VisibilityAIAgents}, which previously only described agents capabilities. By explicitly including verification metadata, the AID enables not only description but also host-independent authentication of such agents.

\begin{corollary}[AID sufficiency]
Given an AID, one can reconstruct the verifier $V_{\!\A}$ for the compositional authentication scheme of agent $\A$. Thus the AID acts as both an identity and a certificate of verifiability.
\end{corollary}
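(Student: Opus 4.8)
The plan is to show that every operation performed by the compositional verifier $V_{\!\A}$ of Theorem~\ref{thm:composition} can be reconstructed from the data contained in $\AID$ alone, so that possession of the document is equivalent to possession of a runnable verifier. Recall from the compositional construction that $V_{\!\A}$ decomposes into three deterministic, public checks: (i) invoking each component verifier $V_{\Core}, V_t$ on its sub-proof; (ii) checking that the reconstructed transcript is internally consistent (i.e.\ $\mathrm{ValidTrace}_{\mathcal{A}}$ holds and tool results are threaded correctly between steps); and (iii) confirming that the claimed message $m$ appears as a plaintext output $\mathsf{Out}_p$ or an authenticated tool input $\mathsf{Out}_t$ of the trace. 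I would establish reconstructibility of each check in turn.

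First I would instantiate the component verifiers. By the definition of the AID, each descriptive/verification pair $(\hat{\Core},\hat{V}_{\Core})$ and $(\hat{t},\hat{V}_t)$ is chosen so that the verification metadata \emph{deterministically instantiates} the corresponding proof-system verifier. Hence $\hat{V}_{\Core}\mapsto V_{\Core}$ and $\hat{V}_t\mapsto V_t$ for every $t\in\mathcal{T}$, yielding exactly the component verifiers required by check (i). This step relies only on these verifiers being functions of public verification data (e.g.\ a SNARK verification key, a TEE measurement and attestation key, or a notary public key for Web Proofs), never of prover-side secrets or of the trace itself.

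Next I would recover checks (ii) and (iii). The authorized tool set $\mathcal{T}$ and the core are fixed by the descriptive metadata $\{\hat{t}\}_{t\in\mathcal{T}}$ and $\hat{\Core}$, which pin down the relations $R_t$ and $R_{\Core}$ and therefore the public, polynomial-time transcript-consistency predicate; since this predicate reads only the trace structure, it requires no secret state and is computable directly from the AID. The output-membership test in (iii) is likewise a fixed public algorithm applied to the reconstructed trace. Finally, I would note that the reconstruction map $\AID\mapsto V_{\!\A}$ is deterministic, so the collision-resistant identity $\mathsf{ID}_{\A}=\Hash(\serialize(\AID))$ binds a \emph{unique} verifier to the agent; and because the reconstructed object is literally the verifier of the scheme $(\Prove_\A,\Verify)$, its completeness and soundness are inherited verbatim from Theorem~\ref{thm:composition} with no further argument. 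Assembling (i)--(iii) gives $V_{\!\A}$, establishing that the AID is simultaneously an identity (via $\mathsf{ID}_{\A}$) and a certificate of verifiability.

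The main obstacle I anticipate is not algebraic but definitional: discharging the assumption that verification metadata \emph{suffices} to instantiate each verifier. One must argue that for each admissible proof system the verifier is a pure function of public parameters, and in particular carries no dependence on secrets held only by the prover or host; a scheme whose verification inherently needs private state (or live interaction with the prover) would break the clean $\AID\mapsto V_{\!\A}$ map. I would therefore make explicit the mild well-formedness condition on component proof systems --- publicly specified, non-interactive verifiers with serializable verification keys --- under which the corollary holds, and observe that the three instantiations used in this paper (succinct cryptographic proofs, TEE attestation, and Web Proofs) all satisfy it.
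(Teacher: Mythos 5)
Your proposal is correct and matches the paper's intent exactly: the paper states this corollary without proof because it is immediate from Definition~\ref{def:agent-vid} (the verification metadata $\hat{V}_{\Core},\hat{V}_t$ ``deterministically instantiate'' each component verifier) together with Theorem~\ref{thm:composition}, which is precisely the argument you unpack. Your closing well-formedness caveat --- that component verifiers must be public, non-interactive functions of serializable verification data --- is a sensible explicit statement of what the definition leaves implicit, not a departure from the paper's route.
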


\paragraph{Example.}
For instance, a trading agent’s AID may reference a GPT-4o core verified via TLS-Notary, alongside a price-feed API verified by consensus. Each entry specifies identifiers (model, API endpoint) and verification metadata (e.g., notary public key, consensus parameters).

\begin{figure}[t]
  \centering
    \begin{minipage}{0.98\linewidth}
      \lstset{style=asiaccs-json}
\begin{lstlisting}
{
  "agent_name": "VeriTradeBot",
  "core": {
    "model": "gpt-4o-2024-05-13",
    "endpoint": "https://api.openai.com/completions",
    "injection_algorithm_uid": "sha256:8afc9b24...",
    "parsing_algorithm_uid": "sha256:e34f8d97...",
    "verification": {
      "TLSNotary": {
        "protocol_version": "v0.1.0-alpha.10",
        "notary_public_key": "ecdsa-p256:04a1b2c3..."
      }
    }
  },
  "tools": [
    {
      "name": "PriceFeedAPI",
      "endpoint": "https://api.coingecko.com/api/v3/simple/price",
      "injection_algorithm_uid": "sha256:bad42015...",
      "parsing_algorithm_uid": "sha256:b932ffea...",     "verification": {
        "Consensus": {
          "protocol": "HotStuff-v1.1",
          "committee_id": "pool-eth-prices-mainnet-001",
          "committee_size": 7,
          "fault_tolerance": "f = 2"
        }
      }
    }
  ],
  "agent_hash": "sha256:27c8f3d8b9a4..."
}
\end{lstlisting}
    \end{minipage}
  \caption{Sample Agent Identity Document (AID) for \emph{VeriTradeBot}.}
  \Description{Code listing showing the Agent Identity Document.}
  \label{fig:sample-aid}
\end{figure}

\section{Candidate Proof Systems}
\label{sec:proof-systems-limitations}

Our framework is agnostic to how each component produces its local proof, provided that the prover–verifier pair declared in the Agent Identity Document (AID) satisfies the completeness and soundness properties defined in Section~\ref{subsec:security-goals}. In practice, three main families of proof systems appear suitable as component proof systems: (i) succinct cryptographic proofs, (ii) trusted execution environments (TEEs), and (iii) consensus-based re-execution. Each provides useful properties in specific contexts, but as we show, none is sufficient for today’s API-based LLM agents.

\paragraph{Succinct cryptographic proofs.}
Succinct non-interactive arguments of knowledge (SNARKs/STARKs)~\cite{groth2016SizePairingBasedNoninteractive,gennaro2013QuadraticSpanPrograms,ben-sasson2018ScalableTransparentPostquantum} enable compact, efficiently verifiable proofs of computation, with strong soundness and optional zero-knowledge. However, proof generation for large-scale inference remains prohibitively expensive. Even optimized ZKML systems~\cite{peng2025SurveyZeroKnowledgeProof,chen2024ZKMLOptimizingSystem} only scale to small- or medium-sized neural networks (e.g., GPT-2–class models). In practice, succinct proofs are therefore most suitable for local tools and smaller sub-models within an agent, rather than as the primary mechanism for verifying billion-parameter LLM cores.

\paragraph{Trusted execution environments (TEEs).}
Hardware enclaves such as Intel SGX and TDX~\cite{schneider2022SoKHardwaresupportedTrusted,costan2016IntelSGXExplained} provide attestation that a specific binary is executed in an untampered environment, while also preserving the confidentiality of the executed code and data from the TEE owner. Modern TEE hardware can support increasingly large workloads, including state-of-the-art LLM inference using the latest Confidential GPUs~\cite{2025NVIDIAConfidentialComputing,el-hindi2022BenchmarkingSecondGeneration,PerformanceConsiderationsHardwareIsolated}. TEEs are thus well suited when agents run \emph{local} inference over open-source models or tools. For API-based agents, however, they would require modifying the API-serving stack, which is typically infeasible for large services. TEEs also inherit strong trust assumptions in hardware vendors and enclave implementations, and remain vulnerable to key-extraction attacks~\cite{vanschaik2024SoKSGXFailHow}, necessitating careful evaluation of the threat model when considering their use.

\paragraph{TEE Proxies.}
A proxy mode, such as outlined in Town Crier~\cite{zhang2016TownCrierAuthenticated}, places a TEE between the client and a remote service, attesting to the authenticity of exchanged transcripts without requiring changes to the service itself. While this method offers low overhead and does not require any modifications to the service, it implies that all requests are routed in plaintext through the TEE rather than being sent directly from the client. This means that, unlike the earlier setting where we primarily used the TEE to prove the integrity of the executed computation, we now additionally require the TEE to maintain the confidentiality of the plaintext data it forwards, so as to prevent the proxy host from learning routed contents. This distinction matters because confidentiality can be considered a stronger assumption than integrity: compromising integrity can allow an attacker to exfiltrate data by altering the processing logic, while compromising confidentiality does not necessarily imply a compromise of integrity, as demonstrated by side-channel attacks~\cite{vanschaik2024SoKSGXFailHow,Kocher2018spectre,Lipp2018meltdown}. As a result, TEE Proxies are most attractive when the relayed data is public or low-sensitivity (e.g., price feeds) and routing through a proxy is acceptable.

\paragraph{Consensus and optimistic re-execution.}
Committee-based re-exe\-cu\-tion (e.g., blockchains or federated clusters) can ensure integrity under honest-majority assumptions~\cite{luu2015DemystifyingIncentivesConsensus}. However, in practice, repeated inference is prohibitively costly for large computations such as state-of-the-art LLMs, making these schemes suitable only for relatively small computations. Optimistic variants~\cite{conway2024OpMLOptimisticMachine} reduce repeated execution but introduce economic security assumptions and complex dispute-resolution procedures. Both approaches lack natural support for selective disclosure. They also require \emph{deterministic} execution, for example complicating interactions with stochastic and rapidly changing external data sources (e.g., price-feed APIs). And while we find them generally unsuitable for API-based agents, consensus-backed execution offers a compelling direction for fully on-chain agents, where each action is consensus-verified and autonomy derives directly from the underlying protocol.

\begin{table*}[t]
\centering
\footnotesize
\renewcommand{\arraystretch}{1.06}
\setlength{\tabcolsep}{6pt}
\caption{Comparison of candidate proof systems for VET components in the  API-based LLM agent setting.}
\label{tab:proof-comparison}
\resizebox{\textwidth}{!}{%
\begin{tabular}{@{} l p{0.2\textwidth} c p{0.26\textwidth} c @{}}
\toprule
\textbf{Proof system} &
\textbf{Applicability} &
\textbf{Deployability} &
\textbf{Trust assumptions} &
\textbf{Overhead} \\
\midrule

TEE (proxy) &
\raggedright API inference/tools (via proxy) &
{\centering \textbf{High} (no changes)\par} &
\raggedright Hardware vendor + enclave + target server &
{\centering \textbf{$\approx 1\times$}\par} \\

TEE (local) &
\raggedright Local inference/tools &
{\centering Medium (special hardware)\par} &
\raggedright Hardware vendor + enclave &
{\centering \textbf{$\approx 1$--$1.3\times$}\par} \\

Consensus re-execution &
\raggedright Small deterministic programs &
{\centering Low (determinism)\par} &
\raggedright Honest majority &
{\centering $N$ re-executions\par} \\

Optimistic re-execution &
\raggedright Deterministic programs &
{\centering Low (determinism)\par} &
\raggedright Honest majority + challenger &
{\centering $\approx 1\times$ (best case)\par} \\

Cryptographic proofs &
\raggedright Tiny local inference/tools &
{\centering Very low (rewrite into circuits)\par} &
\raggedright Cryptographic soundness &
{\centering $\approx10^2$--$10^4\times$\par} \\[4pt]

\midrule

\textbf{Web Proofs (notarized TLS transcripts)} &
\raggedright \textbf{API inference/tools} (no proxy) &
{\centering \textbf{High} (no changes)\par} &
\raggedright \textbf{Notary}$^\dagger$ + target server &
{\centering $\approx 2$--$5\times$\par} \\

\bottomrule
\end{tabular}}
\begin{flushleft}
\footnotesize{\textit{Note:} Overhead estimates are approximations and depend on specific workload. Web Proofs numbers reflect our empirical range (see Section~\ref{sec:evaluation}).}
\footnotesize{$^\dagger$ In TEE Proxies, forwarded data privacy and resulting transcript integrity are both conditional on the security of the underlying TEE. In Web Proofs, privacy with respect to the notary is provided cryptographically, while transcript integrity can be further hardened by running the notary inside a TEE (see Discussion ~\ref{par:notary-trust-model}).}
\end{flushleft}
\end{table*}

\section{Web Proofs as a Proof System}
\label{sec:webproofs}

\newcommand{\reqT}{\mathtt{req}_{\theta}}
\newcommand{\resT}{\mathtt{res}_{\theta}}

\newcommand{\Inject}[2]{\textsf{Inject}_{#1}(#2)}
\newcommand{\Parse}[2]{\textsf{Parse}_{#1}(#2)}
\newcommand{\HTTPReq}{\textsf{\small HTTP-Req}}
\newcommand{\HTTPResp}{\textsf{\small HTTP-Resp}}
\newcommand{\HTTP}[1]{\textsf{HTTP}(#1)}
\newcommand{\HTTPGet}[1]{\texttt{\small GET~#1}}
\newcommand{\JSON}[1]{\texttt{\small #1}}

The proof systems surveyed above target different trust models, and thus offer different trade-offs. In this work, we specifically focus on API-based LLM agents, where the remote model or tool API is assumed to behave correctly (that is, it is treated as a trusted service), while the untrusted party is the host running the agent orchestration code. Under this trust model, proof systems that attest the computation itself (for example, TEEs, ZKML proving, or consensus-based re-execution) remain applicable in principle to API-based components, since the orchestration host could forward verifiable proofs produced by the API service. In practice, however, this would require the API provider to generate and expose such proofs as part of the service, necessitating changes to the API-serving infrastructure, and would typically introduce overhead that scales with computation size. These approaches are thus often best suited for agent components that run locally (for example, local inference or tool execution), where the prover can be deployed alongside the computation.

The closest suitable solution for remote, black-box API components so far appear to be TEE Proxies. However, TEE Proxies still have several limitations: they are not transparent to the API infrastructure, since requests are routed through the proxy rather than originating from the host running the orchestration, and they require the proxy to handle plaintext requests and responses, making confidentiality dependent on the proxy's TEE security assumptions. TEE Proxies are therefore viable for components where forwarded content is public or low-sensitivity (e.g., public price feeds), but may be less suitable for components that handle secret-bearing data (e.g., proprietary prompts or API keys).

This motivates our use of TLS transcript proofs, referred to as \emph{Web Proofs}, as a practical choice that avoids these deployment and trust trade-offs with reasonable overheads. Web Proofs use MPC-assisted TLS~\cite{kalka2024ComprehensiveReviewTLSNotary,tlsnotaryteam2025TLSNotary} to generate cryptographic transcripts of standard HTTPS interactions (and thus require that the target API is TLS-enabled) by adjusting only the client-side logic. Their overhead scales with the size of the HTTPS transcript, and they preserve \emph{perfect secrecy with respect to the notary}, which learns nothing about the plaintext contents of the session. Web Proofs assume that the remote API behaves correctly and require no trust in the host that issues requests to the API, matching the trust assumptions we adopt for API-based LLM agents.

\subsection{Web Proof Sketch}\label{subsec:protocol-sketch}
A Client (the Prover) and a semi-trusted Notary jointly execute a TLS connection to a designated Target Server, which might be a remote API or a web service, such that (i) neither party can complete the session without the other, and (ii) only the Client learns the cleartext of the communication. Upon termination of the session, the Notary signs a succinct commitment to the transcript, yielding a proof $\pi_{\text{TLS}}$ attesting that the interaction indeed occurred with the specified Target Server.

\begin{figure}[htb]
    \centering
    \includegraphics[width=0.95\linewidth]{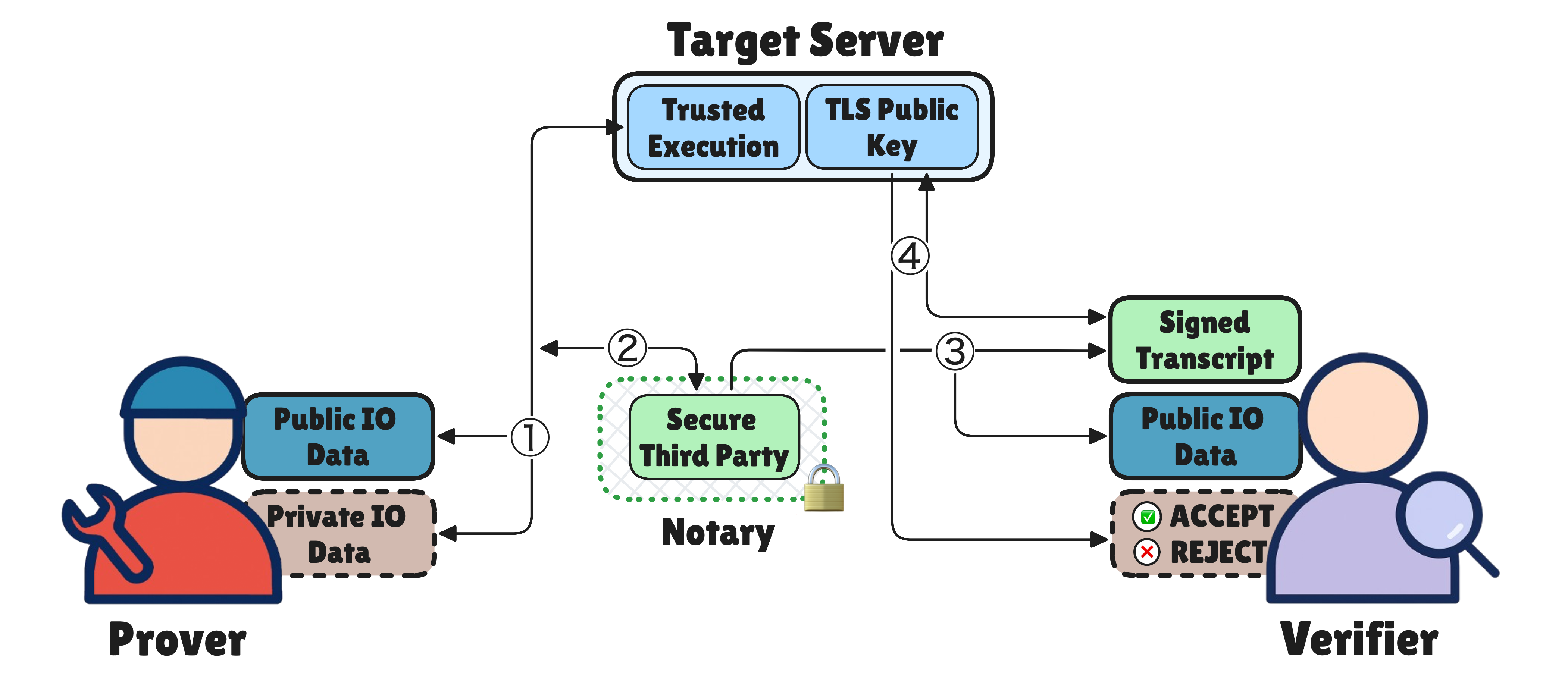}
    \caption{Web Proofs protocol overview.
    \textbf{(1) Connection:} The \emph{Prover} initiates a TLS session with the \emph{Target Server}.
    \textbf{(2) Co-execution:} An \emph{honest-but-curious Notary} jointly executes the MPC-TLS handshake, attesting to exchanged bytes without accessing plaintext.
    \textbf{(3) Transcript:} The Notary outputs a signed commitment, allowing the Prover to selectively disclose signed transcript portions to the \emph{Verifier}.
    \textbf{(4) Verification:} The Verifier checks the disclosed transcript against the Notary’s signature and the Server’s TLS public key, accepting if valid.}
    \Description{Diagram showing the Web Proofs protocol: a Prover connects to a Target Server, co-executes an MPC-TLS handshake with a Notary, obtains a signed transcript, and selectively discloses parts of it to a Verifier, who checks validity against the Notary’s signature and the Server’s TLS key.}
    \label{fig:webproof-diagram}
\end{figure}

\subsection{Security Properties}\label{subsec:security-properties}
Web Proofs provide the following guarantees, under the assumption that the Notary and the Target Server behave correctly, remain live, and keep their respective signing keys secret:

\begin{itemize}[leftmargin=1.2em]
    \item \textbf{Completeness.}
    The Prover can always produce a valid proof for any genuine TLS session with the Target Server.

    \item \textbf{Soundness.}
    No adversarial Prover can forge a proof of a conversation with the Target Server that will be accepted by the Verifier.

    \item \textbf{Privacy (selective disclosure).}
    The protocol ensures that the Notary does not learn the plaintext contents of the session or the identity of the Target Server, and that the Verifier sees only selectively disclosed portions of the transcript. This protects sensitive values (e.g., API keys, proprietary prompts) without weakening verifiability.
\end{itemize}

\paragraph{Notary trust model.}
\label{par:notary-trust-model}
While we need to assume an honest-but-curious Notary, this assumption is strictly weaker than the trust placed in hardware TEEs. This is because in practice, the Notary role can be realized in several ways: 
\begin{itemize}[leftmargin=1.5em]
    \item \emph{Public services} operated by independent organizations (e.g., the Ethereum Foundation’s TLSNotary service);
    \item \emph{Distributed Notaries} that split the role across multiple independent parties, similar to threshold randomness beacons such as the League of Entropy~\cite{taceo2025MultipartyNotariesZkTLS};
    \item \emph{TEE-assisted Notaries} that execute inside an enclave, hardened by hardware isolation  (e.g., Ethereum Foundation’s TLSNotary TEE service \footnote{\url{https://notary.pse.dev/v0.1.0-alpha.12-sgx/}}).
\end{itemize}
These instantiations reduce reliance on any single trusted machine or vendor, and importantly, do not require modifications to the target server. 

\subsection{Integrating Web Proofs into Agent Authentication}
\label{subsec:webproofs-integration}

Having established their security properties, we now show how Web Proofs instantiate the \emph{Component Proof Systems} required by our compositional authentication framework (Section~\ref{subsec:comp-auth}). Recall that an agent $\mathcal{A}=(\Core,\mathcal{T})$ consists of a reasoning core $\Core$ and a set of tools $\mathcal{T}$, each modeled by input–output relations over strings $\Sigma^\ast$ (Definitions~\ref{def:tool}--\ref{def:core}). In practice, both classes of components interact with external services through HTTPS APIs. Web Proofs allow these request–response interactions to be attested without modifying the remote service. To integrate them into our framework, we introduce encoding and parsing functions that map between the agent’s string-level view and the HTTP objects used by Web Proofs.

\paragraph{API tools.}
For a tool $t_d$ corresponding to a domain $d$, let
\[
\begin{aligned}
\Inject{t_d}{x} & : \Sigma^\ast \rightarrow \mathsf{HTTPReq},\\
\Parse{t_d}{\mathsf{HTTPResp}} & : \mathsf{HTTPResp} \rightarrow \Sigma^\ast
\end{aligned}
\]
denote deterministic encoding and parsing functions. A Web Proof transcript $(\reqT,\resT,\pi_{\text{WP}})$ certifies that $\resT$ is a genuine response from $d$ to request $\reqT$. By applying $\Inject{}{}$ and $\Parse{}{}$, we map these low-level objects back into the agent’s execution trace: given $\reqT=\Inject{t_d}{x}$ and $r=\Parse{t_d}{\resT}$, the transcript records that input $x$ to tool $t_d$ produced output $r$.

\paragraph{Cognitive cores.}
Similarly, for a API hosting LLM on domain $d$, we define
\[
\begin{aligned}
\Inject{\Core_h} &: \Sigma^\ast \rightarrow \mathsf{HTTPReq}, \\
\Parse{\Core} &: \mathsf{HTTPResp} \rightarrow
   \bigl(\Sigma^\ast, \mathcal{P}(\mathcal{V}_{\mathrm{tool}} \times \Sigma^\ast)\bigr).
\end{aligned}
\]
A Web Proof transcript $(\reqT,\resT,\pi_{\text{WP}})$, together with $\Inject{}{}$ and $\Parse{}{}$, certifies that $(y,T)=\Parse{\Core}{\resT}$ is a genuine output of the designated API endpoint in response to $\reqT$.

\paragraph{Verifier algorithm.}
Given candidate output $m$ and proof \\$(\reqT,\resT,\pi_{\text{WP}})$, the verifier proceeds as follows:
\begin{enumerate}[leftmargin=1.4em]
  \item Check $\pi_{\text{WP}}$ against the Notary’s public key and confirm that $(\reqT,\resT)$ is a genuine TLS transcript with domain $d$.
  \item Confirm that $\reqT,\resT$ are consistent with the deterministic $\Inject{}{}$ and $\Parse{}{}$ functions specified in the agent’s AID .
  \item Accept if $m$ matches the authenticated value (tool input or plaintext output) extracted from $\resT$.
\end{enumerate}

\paragraph{Discussion.}
This construction shows that Web Proofs can serve as drop-in Component Proof Systems: each tool or core invocation in an agent trace, performed over HTTPS, can be accompanied by a Web Proof, and the verifier only needs to check transcript validity and template consistency. In the Subsection~\ref{subsec:agent-identity}, we show how these verification parameters are explicitly represented in the agent’s Agent Identity Document (AID). 

\subsection{Specifying Web Proofs in the AID}
\label{subsec:tls-api-vid-spec}

To make Web Proofs usable within our authentication framework, each TLS-based API component must be explicitly represented in the agent’s Agents Identity Document (AID) (cf.~ Figure~\ref{fig:sample-aid}). The AID entry specifies:
\begin{enumerate}[leftmargin=1.5em]
  \item the target server domain,
  \item the deterministic injection and parsing algorithms (identified by cryptographic hashes),
  \item the notary’s public verification key,
  \item and the protocol version of the Web Proof system(e.g.,\\ \texttt{TLSNotary-v0.1.0}).
\end{enumerate}

\subsection{Instantiation of Web Proofs}
\label{subsec:webproofs-instantiation}

To use Web Proofs within the VET framework, we instantiate them using TLSNotary~\cite{tlsnotaryteam2025TLSNotary}, which has both public notaries deployed (e.g., \url{https://notary.pse.dev/}) and allows to do self-hosted deployments.

\paragraph{Naïve baseline.}
A natural approach to integrating TLSNotary into VET is to open a single long-lived MPC-TLS channel per target domain, through which all subsequent API calls are routed. However, because TLSNotary setup costs scale linearly with channel capacity, provisioning a channel large enough to carry many future interactions leads to prohibitively high initial latencies. Moreover, for cognitive cores such as OpenAI’s tool-enabled APIs, each request transmits the entire dialogue history, causing significant retransmission overhead inside long-lived channels.

\begin{figure}[t]
  \centering
  \includegraphics[width=0.9\linewidth]{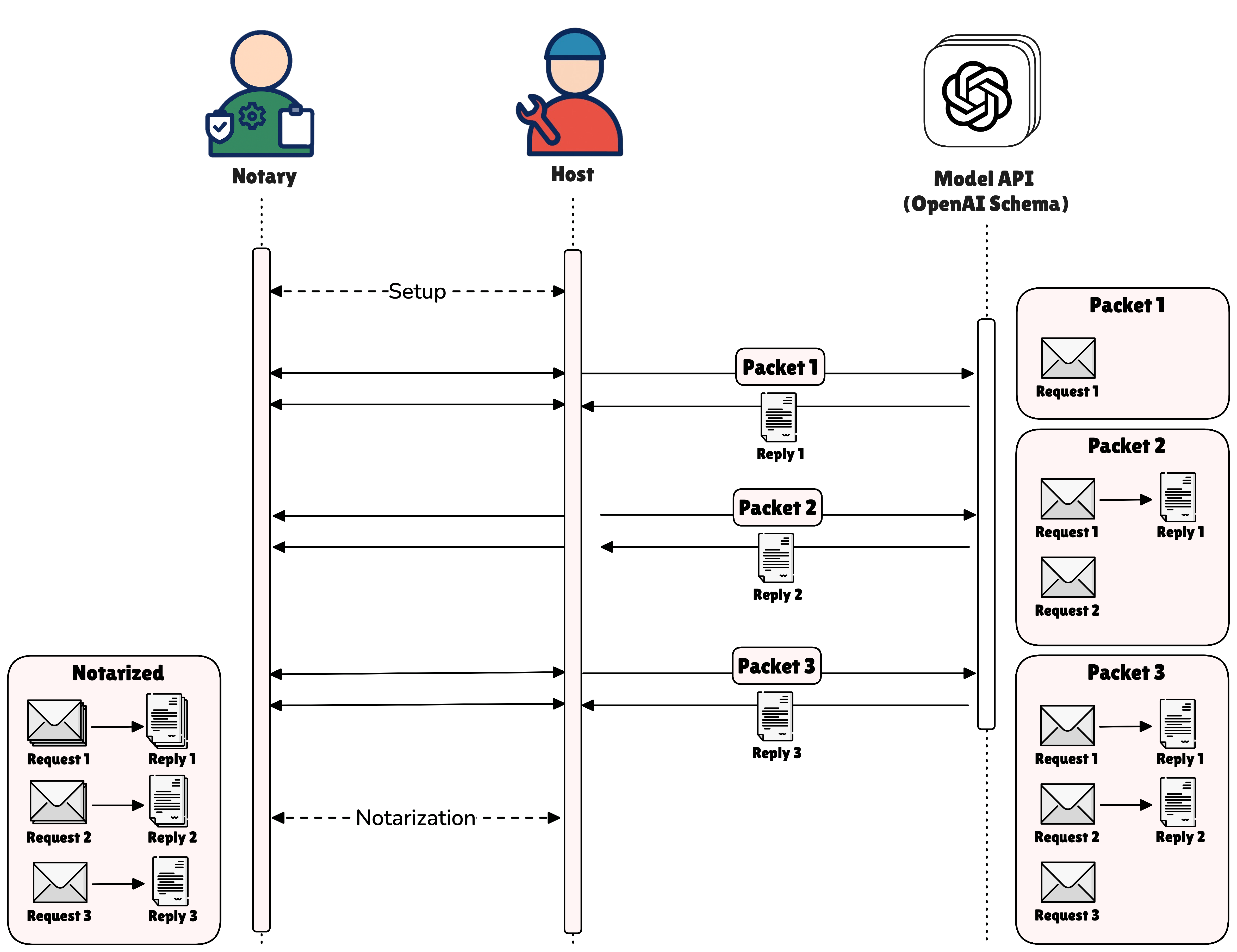}
  \caption{Instantiation of Web Proofs for API-based agent components.
Naïve long-lived channel incurs large one-time setup costs and retransmission overhead.}
  \Description{Diagram of the naïve Web Proofs setup: a long-lived communication channel is maintained between an agent and a verifier, leading to high initial setup costs and retransmission overhead during API-based interactions.}
  \label{fig:webproof-naive}
\end{figure}

\paragraph{Optimized channel strategy.}
Instead, we introduce another strategy, where for long-lived communication channels with model inference APIs we instead establish short-lived MPC-TLS channels on demand, with setup executed in parallel with ongoing established agent MPC-TLS channels. Specifically, while one request–response pair is being processed, the host pre-initializes additional channels with increasing capacity ($M, 2M, 3M, \dots$ bytes), so that the next request can immediately use a ready channel. This approach hides setup latency behind ongoing computation, avoids wasteful retransmission of repeated context, and reduces resource requirements on notaries to be able to maintain larger channels.

\begin{figure}[t]
  \centering
  \includegraphics[width=0.9\linewidth]{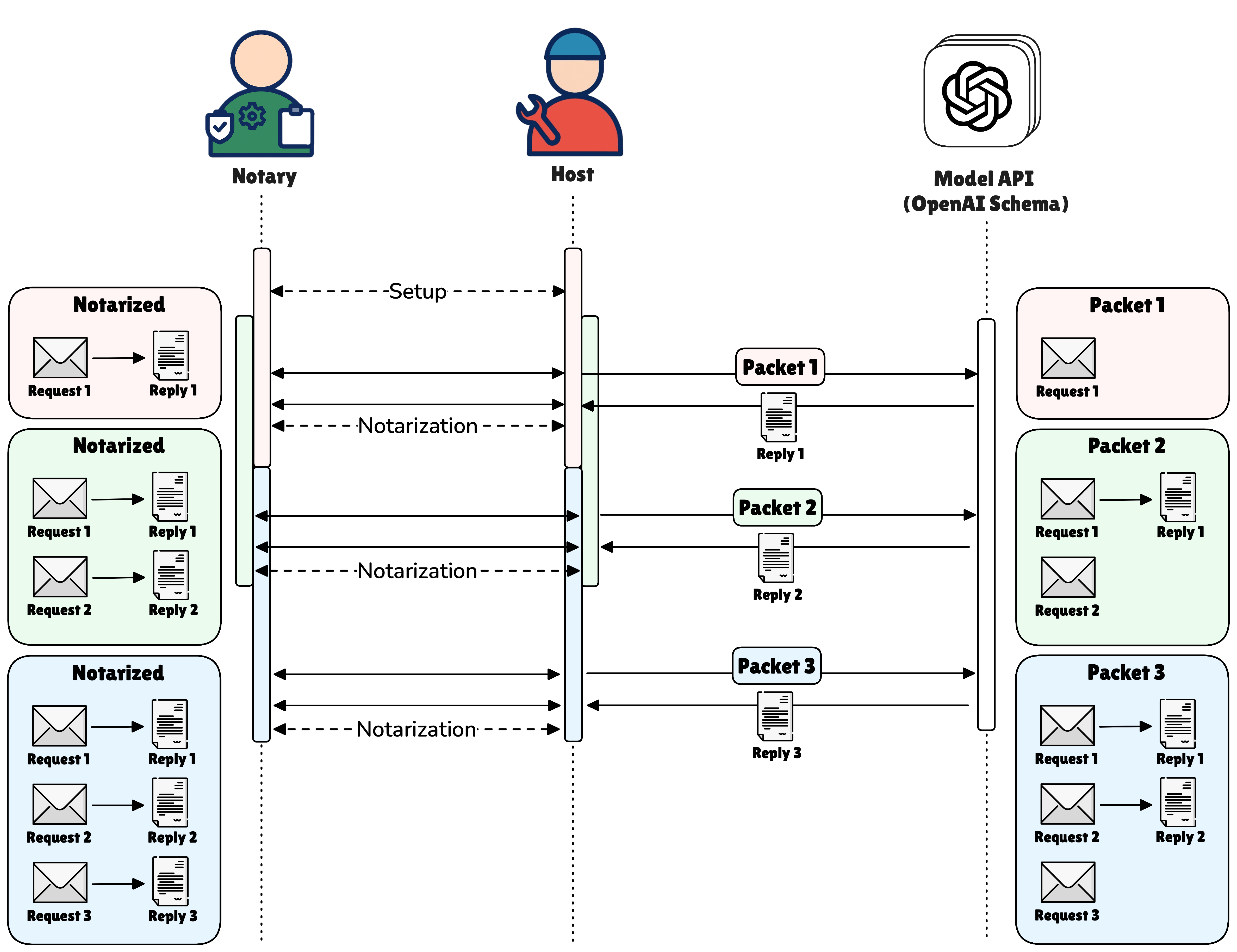}
  \caption{Instantiation of Web Proofs for API-based agent components.
Optimized strategy uses short-lived, parallel channels that amortize setup latency and minimize redundant bandwidth.}
  \Description{Diagram of the optimized Web Proofs setup: multiple short-lived parallel channels are created instead of a single long-lived one, reducing setup latency and avoiding repeated bandwidth overhead.}
  \label{fig:webproof-optimization}
\end{figure}

Importantly, this optimization preserves the security guarantees of Web Proofs and their integration into VET. Completeness and soundness hold as long as the notary executes the protocol correctly, while making the system practical at scale. As shown in Section~\ref{sec:evaluation}, the optimized strategy sustains multi-message sessions with overheads typically below $3\times$ compared to plain inference, whereas the naïve baseline fails to scale beyond a handful of messages.

\section{Evaluation}
\label{sec:evaluation}

To assess the practicality of Web Proofs within VET, we evaluate them as a component proof system for authenticating API-based LLM agents on realistic workloads. Our evaluation examines latency, scalability with session length, and deployment trade-offs across notary configurations and model-serving endpoints. These measurements allow us to determine whether Web Proofs meet the operational requirements of today’s API-based LLM agents and to validate that the optimized channel strategy is sufficient for multi-turn authenticated execution.

All experiments were performed on Google Cloud \texttt{c4.standard-4} instances (4 vCPU, 16~GB RAM, 5~Gbit/s network). We tested three notary configurations:  
(i) a public PSE Notary \footnote{\url{https://notary.pse.dev}} with 4~KB upload / 16~KB download per session;  
(ii) a self-hosted software-only (non-TEE) notary in Google Cloud \texttt{europe-west9-a} supporting 64~KB in both directions; and
(iii) an identically configured self-hosted TEE notary with the TDX extension enabled, allowing us to evaluate whether hardware isolation significantly affects notarization overhead. 

To contextualize overhead and provider variance, we benchmarked several real LLM endpoints hosted by Mistral (\texttt{ministral-3B}, \texttt{ministral-8B}), Anthropic (\texttt{Claude-Haiku-4.5}, \texttt{Claude-Sonnet\--4.5}), and RedPill (\texttt{openai/chatgpt-4o}). Unless otherwise specified, all results use \texttt{Claude-Haiku-4.5} with fixed-size prompts (500\,B request and 1\,KB response) to normalize inference cost across experiments. Benchmark code is available at our evaluation repository\footnote{\url{https://anonymous.4open.science/r/vet-your-agent-8E58/}}.
Table~\ref{tab:env} summarizes the environment.

\begin{table}[h]
\centering
\caption{Benchmark environment.}
\label{tab:env}
\scriptsize
\begin{tabular}{@{}lccc@{}}
\toprule
\textbf{Property} &
\textbf{Public PSE notary} &
\textbf{Software notary} &
\textbf{TEE notary} \\
\midrule
Region & Public (Europe) & europe-west9-a & europe-west9-a \\
RTT (host$\leftrightarrow$notary) & 9.9 ms & 5.0 ms & 4.2 ms \\
Bandwidth (up/down) & - & 5.36 / 4.96 Gbps & 4.62 / 3.56 Gbps \\
Max data / session (up/down) & 4 KB / 16 KB & 64 KB / 64 KB & 64 KB / 64 KB \\
\bottomrule
\end{tabular}
\end{table}

\paragraph{Protocol Comparison.}
\label{par:protocol-comparison}
We first compare a \emph{naïve} instantiation of Web Proofs, where state is carried across rounds in a single long-lived channel, with an \emph{optimized} design, where each message uses a fresh channel established in parallel. The optimized strategy reduces setup overhead from roughly $9.8$ s for a 6‑round naïve session to about $1.5$ s and does not require committing to the total number of messages in advance. More importantly, the naïve design quickly exhausts notary capacity: in our deployment it was effectively limited to six messages per session, whereas the optimized protocol sustained 32 rounds with the same notaries. The short-horizon per‑message latency of the optimized protocol is slightly higher (about $2.5$ s vs.\ $2.1$ s for the naïve variant), reflecting the cost of running channel setup in parallel with communication. However, this modest penalty is outweighed by the much lower setup time and the ability to support longer multi‑message sessions. We also expect that at longer horizons, retransmitting the full transcript in the naïve design would dominate the per‑message cost and eventually make naïve per‑message latency worse. Figure~\ref{fig:naive-vs-opt-haiku} summarizes these effects in terms of total session time as a function of dialogue length.

\paragraph{Scaling to Model Inference.}
We next explore how model size affects Web Proof latency. Using Mistral-hosted \texttt{ministral-3B} and \texttt{ministral-8B} APIs (with known parameter counts), we confirm that notarization overhead is driven by transcript size rather than model complexity. For single-shot, fixed-size prompts, total notarized latency is $2.69$ s for \texttt{ministral-3B} and $2.93$ s for \texttt{ministral-8B}, with overlapping 95\% confidence intervals despite the larger model.

\paragraph{Notary Overheads.}
\label{par:notary-overheads}
We compare total latency as a function of rounds for our Web Proof notaries against a direct API baseline and a TEE Proxy, all evaluated with the same fixed-size prompts (Figure~\ref{fig:proxy-notary-compare}, Table~\ref{tab:proxy-notary-overheads}). Each notary deployment in Table~\ref{tab:env} was also configured to host an HTTPS proxy so that proxy and notary share comparable network conditions. Across sampled providers and models, TEE Proxies add only about 1--20\% overhead relative to direct API calls, whereas TLS-notary runs introduce higher, round-dependent overheads. For the first message, when only a small initial channel is provisioned, notarized latency is typically 15--80\% higher than direct calls, depending on the underlying model latency. By Round~32, per-message latency under a TLS notary grows to roughly $\approx2\times$--$\approx7\times$ the direct-call latency, with the largest slowdowns for faster models where the recurring setup cost can no longer be hidden behind the API’s own computation time. As discussed in the protocol comparison above, this growth arises from transmitting increasingly large transcripts to the notary and incurring correspondingly higher setup costs. 

In practice, this growth in per-message overhead can be mitigated by periodic conversational summarization that compresses older turns, a technique already found in code assistants and agentic systems~\cite{IntroducingOperatorOpenAI,anthropic2025effective}. This helps cap context growth and keeps notarization overhead closer to the initial per-message cost.

\begin{figure}[t]
  \centering
  \begin{subfigure}[t]{0.485\linewidth}
    \centering
  \includegraphics[width=\linewidth]{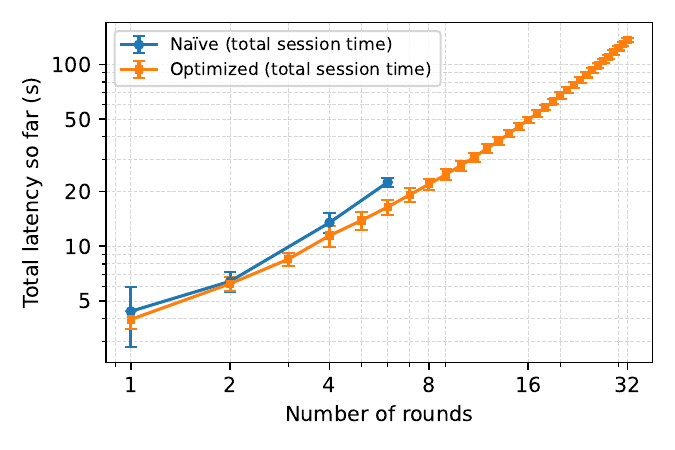}
      \caption{Total session time (including setup) for naïve long-lived sessions (1/2/4/6 rounds) and the optimized per-message TLS strategy (rolling cumulative time) on a log--log scale. Error bars show standard deviation over 10 runs.}
  \label{fig:naive-vs-opt-haiku}
  \end{subfigure}\hfill
  \begin{subfigure}[t]{0.485\linewidth}
    \centering
    \includegraphics[width=\linewidth]{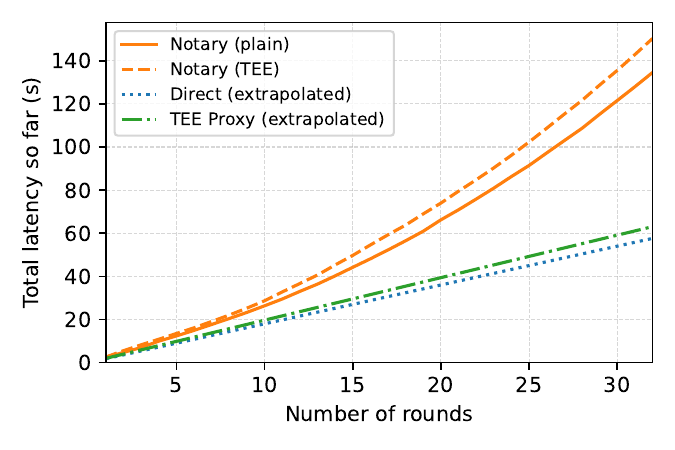}
    \caption{Total latency vs.\ rounds for direct API calls, a TEE-backed proxy, an optimized Web Proof notary, and a TEE-backed notary. Lines for Direct/TEE Proxy are extrapolated from a single round.}
    \label{fig:proxy-notary-compare}
  \end{subfigure}
  \caption{Web Proof application latency. (a) Optimized channels amortize handshakes and sustain multi-message sessions, while the naïve approach does not scale beyond a few messages. (b) Total latency vs.\ rounds for baseline (direct), TEE Proxy and Web Proof notaries (software and TEE-backed), highlighting the additional overhead of notarization relative to conventional proxies.}
  \Description{Two plots of Web Proof application latency. Subfigure (a) compares naïve long-lived vs.\ optimized short-lived sessions under different notaries, showing that only the optimized approach scales to longer multi-message sessions. Subfigure (b) shows total latency vs.\ rounds for direct calls, TEE Proxy, and Web Proof notaries (software and TEE-backed), illustrating the gap between proxy-based baselines and notarized execution.}
  \label{fig:combined-latency}
\end{figure}

\begin{table}[t]
\centering
\caption{Latency across models for a single 500\,B / 1\,KB message. Each entry shows mean latency $\pm$ std in milliseconds, with multiplicative overhead relative to the direct API call in parentheses. TLS-notary values additionally report the 32nd-round latency and overhead (first / 32nd).}
\label{tab:proxy-notary-overheads}
\scriptsize
\resizebox{\linewidth}{!}{%
\begin{tabular}{@{}lccc@{}}
\toprule
Model & Direct & TEE Proxy & TLS notary (1st / 32nd) \\
\midrule
OpenAI GPT-4o & $2180 \pm 96$ $(1.00\times)$ & $2290 \pm 89$ $(1.05\times)$ & $3300 \pm 437$ / $7300 \pm 559$ $(1.51\times / 3.35\times)$ \\
Claude-Haiku-4.5 & $1800 \pm 103$ $(1.00\times)$ & $1970 \pm 177$ $(1.09\times)$ & $2460 \pm 204$ / $6610 \pm 352$ $(1.37\times / 3.67\times)$ \\
Claude-Sonnet-4.5 & $4590 \pm 166$ $(1.00\times)$ & $4680 \pm 232$ $(1.02\times)$ & $5340 \pm 155$ / $8860 \pm 355$ $(1.16\times / 1.93\times)$ \\
Mistral 3B & $753 \pm 55$ $(1.00\times)$ & $756 \pm 52$ $(1.00\times)$ & $1330 \pm 110$ / $5940 \pm 557$ $(1.77\times / 7.89\times)$ \\
Mistral 8B & $901 \pm 46$ $(1.00\times)$ & $1060 \pm 123$ $(1.18\times)$ & $1590 \pm 135$ / $6030 \pm 325$ $(1.77\times / 6.70\times)$ \\
\bottomrule
\end{tabular}}
\end{table}

\paragraph{Comparison with Cryptographic Proofs.}
Table~\ref{tab:compare} situates Web Proofs alongside succinct zero-knowledge proofs of inference (SNARK-based ZKML). For example, for the Mistral-hosted \texttt{ministral-8B} API, a single notarized call under Web Proofs has total latency of about $2.93$\,s compared to $0.90$\,s for a direct call, a slowdown of roughly $3\times$ while keeping the model as a black-box service. By contrast, even for a much smaller LeNet-5 network with only 61k parameters (roughly $10^6\times$ fewer than modern LLMs), SNARK-based ZKML proofs take around 15\,s per \emph{token}. This several-orders-of-magnitude gap highlights that full cryptographic verification of inference remains substantially more expensive than notarizing TLS transcripts, and is better viewed as a complementary primitive that can be encoded in the AID for smaller computations when appropriate.

\begin{table}[ht]
\centering
\caption{Latency Comparison with Cryptographic Proofs.}
\label{tab:compare}
\scriptsize
\begin{tabular}{@{}lcc@{}}
\toprule
 & \textbf{Web Proofs (TLS notary)} & \textbf{ZKML (EZKL, LeNet-5 SNARK)} \\
\midrule
Model / size & Mistral \texttt{ministral-8B} API & LeNet-5 CNN (61k params) \\
Latency & $\approx$3s / message (naïve) & $\sim$15s / token \\
Scaling & transcript size & parameter count \\
\bottomrule
\end{tabular}
\end{table}

\paragraph{TEE-backed Notaries.}
\label{par:tee-backed-notaries}
Finally, we assess the effect of using TEE to reduce trust assumptions in the Notary. We compare our optimized per-message TLS notary running as a plain process with an otherwise identical deployment inside an Intel TDX enclave. Over 32-round sessions, total notarized latency increases from $136.1$\,s $\pm 9.1$\,s (software notary) to $152.0$\,s $\pm 5.2$\,s (TEE-backed), a modest $\approx\!1.12\times$ overhead. Setup time remains similar (1.50\,s vs.\ 1.70\,s), and per-round latency shifts from 4.20\,s to 4.70\,s on average. Unlike TEE Proxies, a TEE-backed notary preserves the perfect secrecy properties of Web Proofs, in the sense that the notary does not learn the plaintext contents of the TLS session, while remaining transparent to the model provider. These results indicate that TEE-backed notaries can strengthen integrity and confidentiality guarantees with minimal additional cost compared to a software-only notary.

\paragraph{Discussion.}
 Across both self-hosted and public endpoints, our evaluation shows that Web Proofs typically introduce less than a $3\times$ slowdown compared to direct API calls, with overhead driven primarily by transcript size and network placement rather than model size. In absolute terms, latencies remain in the range of a few seconds per message, which is acceptable for the low-frequency, high-value decisions where autonomous agents are most likely to be used (e.g., long-term trading, policy drafting, governance workflows), and where baseline LLM calls already take hundreds of milliseconds to seconds. In terms of trust, Web Proofs avoid introducing additional TEE hardware assumptions for plaintext confidentiality of communicated data, and can be strengthened by using distributed or TEE-backed notaries as discussed above. The primary performance bottlenecks are setup latency and limited channel capacity, which motivate both the optimized channel strategy and the transcript-compaction techniques described in this section.

\section{Case Study: VeriTrade}
\label{sec:case-study}

To illustrate the practical use of our framework, we implemented \textbf{VeriTrade}\footnote{VeriTrade code is available at \url{https://anonymous.4open.science/r/vet-your-agent-8E58/agent/}.}, an autonomous AI trading agent that produces, for each trade, a proof that its decision process was consistent with its declared configuration. VeriTrade mirrors existing ``autonomous'' portfolio-management agents, while showing how authentication can be attached to their outputs in a way that is independent of the orchestration host. The case study also explains proof-system choices for different VeriTrade components and shows how VET composes different proof-systems within a single Agent Identity Document (AID).

\paragraph{System Overview.}
VeriTrade fetches market data, generates trading decisions via an LLM core, and executes trades on a decentralized exchange (CoW Swap), submitting a verifiable execution trace alongside each trading decision. VeriTrade consists of:
\begin{itemize}[leftmargin=1.4em]
    \item \textbf{Market Data Tools (via TEE Proxy):} CoinGecko API for price and Polymarket API for sentiment data, proven via a TEE Proxy.
    \item \textbf{Cognitive Core (via Web Proofs):} \texttt{Claude-Haiku-4.5} via Anthropic's API, proven via Web Proofs.
\end{itemize}

\begin{figure}[t]
    \centering
    \includegraphics[width=0.9\linewidth]{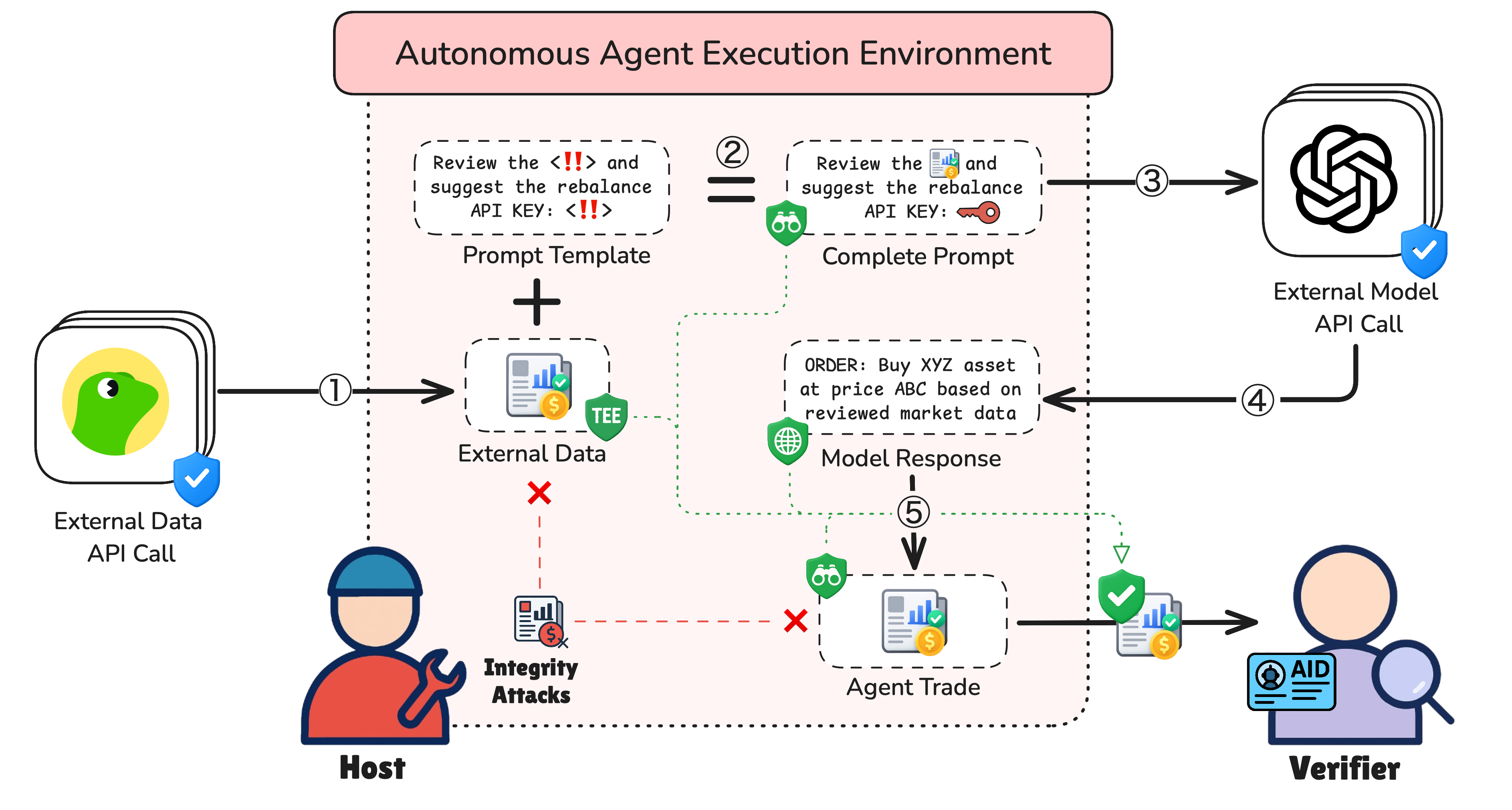}
    \caption{VeriTrade architecture. Market data is fetched via public APIs, proven using a TEE Proxy, while trade decisions are made by \texttt{Claude-Haiku-4.5}, proven using Web Proofs.}
    \Description{Diagram of the VeriTrade system architecture: public market APIs provide data verified by a TEE Proxy, a \texttt{Claude-Haiku-4.5} core generates trading decisions verified via Web Proofs, and the agent executes trades on a decentralized exchange.}
    \label{fig:veritrade_architecture}
\end{figure}

\paragraph{Selecting Proof Systems for VeriTrade Components.}
For the market-data API tools used by VeriTrade, we observe that requests and responses to these endpoints do not contain sensitive data, which means that a lower-overhead TEE Proxy is sufficient. In contrast, \texttt{Claude-Haiku-4.5} inference calls carry secret-bearing authentication and may include proprietary prompts, requiring careful consideration of the trade-offs. We benchmark both a TEE Proxy and Web Proofs in our deployment. Consistent with our evaluation in Paragraph~\ref{par:notary-overheads} (Figure~\ref{fig:proxy-notary-compare}, Table~\ref{tab:proxy-notary-overheads}), the TEE Proxy incurs lower overhead than Web Proofs: authenticated trading-decision latency is $4.20\,\mathrm{s}\,\pm\,0.71\,\mathrm{s}$, compared to $5.80\,\mathrm{s}\,\pm\,0.70\,\mathrm{s}$ (20 runs per setting). However, we find that the additional overhead is acceptable for periodic trading decisions and is not critical. Instead, the stronger secrecy of communicated data offered by Web Proofs motivates our choice to use Web Proofs for the cognitive core.

\begin{table}[h]
\centering
\caption{Choosing between TEE Proxies and Web Proofs for API-based agent components.}
\label{tab:choose-tee-vs-webproofs}
\scriptsize
\begin{tabular}{@{}lcc@{}}
\toprule
 & \textbf{TEE Proxy} & \textbf{Web Proofs} \\
\midrule
Transparency & No (sent through proxy) & Yes (no proxy) \\
Data secrecy & Weaker (TEE trusted with plaintext) & Stronger (notary learns no plaintext) \\
First-call overhead & $\approx$1--20\% (Table~\ref{tab:proxy-notary-overheads}) & $\approx$15--80\% (Table~\ref{tab:proxy-notary-overheads}) \\
\midrule
\textbf{Best used for:} & \textbf{Public / low-sensitivity data} & \textbf{Secret / proprietary data} \\
\bottomrule
\end{tabular}
\end{table}

\paragraph{Verifier.}
Proofs for each component are collected locally by the agent host and combined into a verifiable execution trace that is submitted alongside each trade decision. This enables auditability: anyone can authenticate (via a simple UI; see Appendix~\ref{app:veritrade-ui}) that trades were indeed produced by an agent conforming to the declared AID (see Appendix~\ref{app:veritrade-aid} for an excerpt), and not tampered with by the host.

\paragraph{Takeaway.}
VeriTrade's case study demonstrates two points. First, agent authentication via VET is already practical, with authenticated decision latencies in the order of seconds (e.g., $5.80\,\mathrm{s}\,\pm\,0.70\,\mathrm{s}$ for Web Proofs in our deployment). Second, we have practically demonstrated how VET can effectively combine multiple component proof systems, such as using Web Proofs for secret-bearing inference calls and a TEE Proxy for public market-data tools, selecting the most appropriate proof system for each agent component.

\section{Limitations and Future Work}
\label{sec:limitations_and_future_work}

Our work demonstrates the feasibility of host-independent \emph{authentication} for LLM-based agents, but important gaps remain before the stronger property of \emph{host-independent autonomy} can be realized. Authentication ensures that outputs match a declared configuration, yet a malicious host can still bias which outputs are revealed, delay disclosure, or steer behavior through covert timing and side channels. Closing this gap is a central objective for future work.

Several further limitations also point to concrete research directions. First, our current architecture assumes a honest-but-curious notary; compromise of its signing key would undermine both correctness, and reliance on a single entity creates centralization risks. Future designs should explore threshold-MPC or hybrid MPC-in-TEE deployments to distribute trust. Second, our scheme lacks freshness guarantees: replaying old proofs or selectively presenting stale outputs remains possible. Incorporating cryptographic timestamps or randomness beacons (e.g., Drand) would strengthen liveness and replay protection. Third, our formalism and implementation focus on a single reasoning core with sequential tool invocations, whereas emerging agents rely on parallel, clustered, or hierarchical structures. Extending compositional proofs to these richer execution models is an open challenge. Finally, practical deployment is constrained by the absence of standardized proof formats and tooling for verifiable bundles, which limits interoperability and adoption in sector-specific settings such as finance or governance.

In summary, VET establishes the first practical step toward host-independent autonomy, but future systems must reduce residual host influence, distribute trust in the notary, enforce freshness, and support more flexible agent architectures. Addressing these challenges will advance verifiable agents from authenticated outputs toward full autonomy in adversarial environments.
\section{Conclusion}
\label{sec:conclusion}

Autonomous agents are rapidly advancing from research prototypes to operating in both personal and public-facing capacities, yet the fact that their \emph{perception of autonomy} is in reality under \emph{complete host control} makes a dangerous situation, which has already caused trouble. This paper introduced VET (\emph{Verifiable Execution Traces}), the first formal framework for helping establish such \emph{host-independent autonomy}. VET shifts the trust anchor in agent outputs from the person who operates the host to the definition of the agent itself, providing a compositional way to bind each output to an Agent Identity Document via a Verifiable Execution Trace.

After surveying existing proof systems, we chose to instantiate VET with Web Proofs, a TLS-based proof system that works over black-box APIs without modifying model-serving infrastructure, and we complement it with a TEE Proxy for public, low-sensitivity tools where integrity is the primary requirement. Web Proofs satisfy soundness, completeness, and selective disclosure under realistic assumptions. Our evaluation demonstrates modest overheads, typically below $3\times$ compared to plain API-call time, even when applied to multi-billion-parameter models, and our \textsc{VeriTrade} case study illustrates feasibility in a high-stakes financial setting.

Our contributions mark the first practical step toward \emph{host-independent autonomy}. While we achieve robust authentication of agent outputs, realizing full autonomy requires further reducing assumptions about the host, mitigating covert timing and side-channel manipulation, and strengthening liveness and freshness guarantees. By formalizing agent authentication, demonstrating a practical instantiation, and explicitly framing the remaining gap to autonomy, this work lays a foundation for future research at the intersection of security, cryptography, and autonomous agent systems.

\begin{acks}
Simon Birnbach was supported by the Government Office for Science and the Royal Academy of Engineering under the UK Intelligence Community Postdoctoral Research Fellowships scheme.
Christian Schroeder de Witt was supported by a Royal Academy of Engineering Research Fellowship.
Artem Grigor was supported by an EZKL Fellowship (\url{https://www.ezkl.xyz}).
\end{acks}

\bibliography{zotero}

@misc{2024WhatTruthTerminal,
  title = {What {{Is Truth Terminal}}?},
  shorttitle = {Truth {{Terminal Manipulations}}?},
  author = {Alisha, Bains},
  year = {2024},
  month = oct,
  journal = {CCN.com},
  urldate = {2025-04-23},
  abstract = {Truth Terminal is an AI chatbot by Andy Ayrey that merges artificial intelligence with memecoin culture and cryptocurrency dynamics.},
  howpublished = {https://www.ccn.com/education/crypto/what-is-truth-terminal/},
  langid = {american},
  file = {/Users/gurak/Zotero/storage/JX2WM8V6/what-is-truth-terminal.html}
}

@misc{2025Agent2AgentProtocolA2A,
  title = {{{Agent2Agent Protocol}} ({{A2A}})},
  shorttitle = {Google {{A2A}}},
  author = {Google},
  year = {2025},
  month = may,
  urldate = {2025-05-31},
  howpublished = {https://google-a2a.github.io/A2A/\#get-started-with-a2a},
  file = {/Users/gurak/Zotero/storage/3BZFKDIA/A2A.html}
}

@misc{2025ElizaOSEliza,
  title = {{{elizaOS}}/eliza},
  shorttitle = {{{ElizaOS Agent Framework}}},
  author = {ElizaOS Team},
  year = {2025},
  month = apr,
  urldate = {2025-04-24},
  abstract = {Autonomous agents for everyone},
  copyright = {MIT},
  howpublished = {elizaOS}
}

@misc{2025NVIDIAConfidentialComputing,
  title = {{{NVIDIA Confidential Computing}}},
  shorttitle = {{{NVIDIA H100 Confidential}}},
  author = {NVIDIA},
  year = {2025},
  month = may,
  journal = {NVIDIA},
  urldate = {2025-05-16},
  abstract = {Securing data and AI models in use.},
  howpublished = {https://www.nvidia.com/en-gb/data-center/solutions/confidential-computing/},
  langid = {british},
  file = {/Users/gurak/Zotero/storage/NE5SCMSJ/confidential-computing.html}
}

@misc{AIOfficeCFO,
  title = {{{AI Office}} of the {{CFO}} - {{AI}} agents to help with finance operations {\textbar} {{Payhawk}}},
  shorttitle = {{{BookKeeping Agent}}},
  author = {Payhawk},
  year = {2025},
  urldate = {2025-04-25},
  abstract = {Payhawk launches the "AI Office of the CFO," a suite of purpose-built agents designed to drastically enhance finance team efficiency and streamline operations.},
  howpublished = {https://payhawk.com/blog/ai-office-of-the-cfo-brings-enterprise-ready-ai-to-finance-operations},
  langid = {english},
  file = {/Users/gurak/Zotero/storage/FXY22LMK/ai-office-of-the-cfo-brings-enterprise-ready-ai-to-finance-operations.html}
}

@book{anderson2020SecurityEngineering,
  title = {Security {{Engineering}}},
  shorttitle = {Security {{Engineering}}},
  author = {Anderson, Ross},
  year = {2020},
  publisher = {{John Wiley and Sons}},
  address = {Indianapolis},
  isbn = {978-1-119-64278-7},
  langid = {english},
  file = {/Users/gurak/Zotero/storage/RZVY8FLD/Anderson - 2020 - Security Engineering  .pdf}
}

@misc{andreessen2024_ai_grant,
  title        = {{AI} agent and funding dynamics: Grant for self-acting {AI} development},
  shorttitle   = {Agent gets money from {A16Z}},
  author       = {Andreessen, Marc and {Terminal of Truths}},
  year         = {2024},
  month        = jul,
  howpublished = {Post on X (formerly Twitter)},
  url          = {https://x.com/AndyAyrey/status/1811168313235648687},
  note         = {Accessed: 2025-12-14}
}

@misc{anthropic2025IntroducingModelContext,
  title = {Introducing the {{Model Context Protocol}}},
  shorttitle = {{{MCP Anthropic}}},
  author = {Anthropic},
  year = {2025},
  urldate = {2025-05-12},
  abstract = {The Model Context Protocol (MCP) is an open standard for connecting AI assistants to the systems where data lives, including content repositories, business tools, and development environments. Its aim is to help frontier models produce better, more relevant responses.},
  howpublished = {https://www.anthropic.com/news/model-context-protocol},
  langid = {english},
  file = {/Users/gurak/Zotero/storage/RDPRB3NM/model-context-protocol.html}
}

@inproceedings{arbaugh1997SecureReliableBootstrap,
  title = {A secure and reliable bootstrap architecture},
  shorttitle = {Secure {{Bootstrap}}},
  booktitle = {Proceedings. 1997 {{IEEE Symposium}} on {{Security}} and {{Privacy}} ({{Cat}}. {{No}}.{{97CB36097}})},
  author = {Arbaugh, W.A. and Farber, D.J. and Smith, J.M.},
  year = {1997},
  month = may,
  pages = {65--71},
  publisher = {IEEE},
  address = {Oakland, CA, USA},
  issn = {1081-6011},
  doi = {10.1109/SECPRI.1997.601317},
  urldate = {2025-08-26},
  abstract = {In a computer system, the integrity of lower layers is typically treated as axiomatic by higher layers. Under the presumption that the hardware comprising the machine (the lowest layer) is valid, the integrity of a layer can be guaranteed if and only if: (1) the integrity of the lower layers is checked and (2) transitions to higher layers occur only after integrity checks on them are complete. The resulting integrity "chain" inductively guarantees system integrity. When these conditions are not met, as they typically are not in the bootstrapping (initialization) of a computer system, no integrity guarantees can be made, yet these guarantees are increasingly important to diverse applications such as Internet commerce, security systems and "active networks". In this paper, we describe the AEGIS architecture for initializing a computer system. It validates integrity at each layer transition in the bootstrap process. AEGIS also includes a recovery process for integrity check failures, and we show how this results in robust systems.},
  file = {/Users/gurak/Zotero/storage/LXRXRFZI/Arbaugh et al. - 1997 - A secure and reliable bootstrap architecture.pdf;/Users/gurak/Zotero/storage/IKKMI92K/601317.html}
}

@article{arora1998ProofVerificationHardness,
  title = {Proof verification and the hardness of approximation problems},
  shorttitle = {{{PCP Foundation}}},
  author = {Arora, Sanjeev and Lund, Carsten and Motwani, Rajeev and Sudan, Madhu and Szegedy, Mario},
  year = {1998},
  month = may,
  journal = {J. ACM},
  volume = {45},
  number = {3},
  pages = {501--555},
  issn = {0004-5411},
  doi = {10.1145/278298.278306},
  urldate = {2025-08-25},
  abstract = {We show that every language in NP has a probablistic verifier that checks membership proofs for it using logarithmic number of random bits and by examining a constant number of bits in the proof. If a string is in the language, then there exists a proof such that the verifier accepts with probability 1 (i.e., for every choice of its random string). For strings not in the language, the verifier rejects every provided ``proof'' with probability at least 1/2. Our result builds upon and improves a recent result of Arora and Safra [1998] whose verifiers examine a nonconstant number of bits in the proof (though this number is a very slowly growing function of the input length).As a consequence, we prove that no MAX SNP-hard problem has a polynomial  time approximation scheme, unless NP = P. The class MAX SNP was defined by Papadimitriou and Yannakakis [1991] and hard problems for this class include vertex cover, maximum satisfiability, maximum cut, metric TSP, Steiner trees and shortest superstring. We also improve upon the clique hardness results of Feige et al. [1996] and Arora and Safra [1998] and show that there exists a positive {$\varepsilon$} such that approximating the maximum clique size in an N-vertex graph to within a factor of N{$\varepsilon$} is NP-hard.},
  file = {/Users/gurak/Zotero/storage/829GGNFL/Arora et al. - 1998 - Proof verification and the hardness of approximation problems.pdf}
}

@misc{AuthGenAIAuth0,
  title = {Auth for {{GenAI}} {\textbar} {{Auth0}}},
  shorttitle = {Prototype of {{AI Identity}}},
  author = {Auth0},
  year = {2025},
  urldate = {2025-06-03},
  abstract = {Auth for Generative AI Applications},
  howpublished = {https://www.auth0.ai/},
  langid = {english},
  file = {/Users/gurak/Zotero/storage/II3SI7LS/ai.html}
}

@inproceedings{babai1985TradingGroupTheory,
  title = {Trading group theory for randomness},
  shorttitle = {{{VC Foundation}}},
  booktitle = {Proceedings of the seventeenth annual {{ACM}} symposium on {{Theory}} of computing},
  author = {Babai, L},
  year = {1985},
  month = dec,
  series = {{{STOC}} '85},
  pages = {421--429},
  publisher = {Association for Computing Machinery},
  address = {New York, NY, USA},
  doi = {10.1145/22145.22192},
  urldate = {2025-08-25},
  abstract = {In a previous paper [BS] we proved, using the elements of the theory of nilpotent groups, that some of the fundamental computational problems in matriz groups belong to NP. These problems were also shown to belong to coNP, assuming an unproven hypothesis concerning finite simple groups.The aim of this paper is to replace most of the (proven and unproven) group theory of [BS] by elementary combinatorial arguments. The result we prove is that relative to a random oracle B, the mentioned matrix group problems belong to (NP{$\cap$}coNP)B.The problems we consider are membership in and order of a matrix group given by a list of generators. These problems can be viewed as multidimensional versions of a close relative of the discrete logarithm problem. Hence NP{$\cap$}coNP might be the lowest natural complexity class they may fit in.We remark that the results remain valid for black box groups where group operations are performed by an oracle.The tools we introduce seem interesting in their own right. We define a new hierarchy of complexity classes AM(k) ``just above NP'', introducing Arthur vs. Merlin games, the bounded-away version of Papdimitriou's Games against Nature. We prove that in spite of their analogy with the polynomial time hierarchy, the finite levels of this hierarchy collapse to AM=AM(2). Using a combinatorial lemma on finite groups [BE], we construct a game by which the nondeterministic player (Merlin) is able to convince the random player (Arthur) about the relation [G]=N provided Arthur trusts conclusions based on statistical evidence (such as a Slowly-Strassen type ``proof'' of primality).One can prove that AM consists precisely of those languages which belong to NPB for almost every oracle B.Our hierarchy has an interesting, still unclarified relation to another hierarchy, obtained by removing the central ingredient from the User vs. Expert games of Goldwasser, Micali and Rackoff.},
  isbn = {978-0-89791-151-1},
  file = {/Users/gurak/Zotero/storage/7U47R9JX/Babai - 1985 - Trading group theory for randomness.pdf}
}

@misc{backlund2025VendingBenchBenchmarkLongTerm,
  title = {Vending-{{Bench}}: {{A Benchmark}} for {{Long-Term Coherence}} of {{Autonomous Agents}}},
  shorttitle = {Vending-{{Bench}}},
  author = {Backlund, Axel and Petersson, Lukas},
  year = {2025},
  month = feb,
  number = {arXiv:2502.15840},
  eprint = {2502.15840},
  primaryclass = {cs},
  publisher = {arXiv},
  doi = {10.48550/arXiv.2502.15840},
  urldate = {2025-04-24},
  abstract = {While Large Language Models (LLMs) can exhibit impressive proficiency in isolated, short-term tasks, they often fail to maintain coherent performance over longer time horizons. In this paper, we present Vending-Bench, a simulated environment designed to specifically test an LLM-based agent's ability to manage a straightforward, long-running business scenario: operating a vending machine. Agents must balance inventories, place orders, set prices, and handle daily fees - tasks that are each simple but collectively, over long horizons ({$>$}20M tokens per run) stress an LLM's capacity for sustained, coherent decision-making. Our experiments reveal high variance in performance across multiple LLMs: Claude 3.5 Sonnet and o3-mini manage the machine well in most runs and turn a profit, but all models have runs that derail, either through misinterpreting delivery schedules, forgetting orders, or descending into tangential "meltdown" loops from which they rarely recover. We find no clear correlation between failures and the point at which the model's context window becomes full, suggesting that these breakdowns do not stem from memory limits. Apart from highlighting the high variance in performance over long time horizons, Vending-Bench also tests models' ability to acquire capital, a necessity in many hypothetical dangerous AI scenarios. We hope the benchmark can help in preparing for the advent of stronger AI systems.},
  archiveprefix = {arXiv},
  file = {/Users/gurak/Zotero/storage/TES87Y3P/Backlund and Petersson - 2025 - Vending-Bench A Benchmark for Long-Term Coherence of Autonomous Agents.pdf;/Users/gurak/Zotero/storage/SFHMEWZ6/2502.html}
}

@inproceedings{bano2019SoKConsensusAge,
  title = {{{SoK}}: {{Consensus}} in the {{Age}} of {{Blockchains}}},
  shorttitle = {Consensus {{SoK}}},
  booktitle = {Proceedings of the 1st {{ACM Conference}} on {{Advances}} in {{Financial Technologies}}},
  author = {Bano, Shehar and Sonnino, Alberto and {Al-Bassam}, Mustafa and Azouvi, Sarah and McCorry, Patrick and Meiklejohn, Sarah and Danezis, George},
  year = {2019},
  month = oct,
  series = {{{AFT}} '19},
  pages = {183--198},
  publisher = {Association for Computing Machinery},
  address = {New York, NY, USA},
  doi = {10.1145/3318041.3355458},
  urldate = {2025-05-16},
  abstract = {The core technical component of blockchains is consensus: how to reach agreement among a distributed network of nodes. A plethora of blockchain consensus protocols have been proposed---ranging from new designs, to novel modifications and extensions of consensus protocols from the classical distributed systems literature. The inherent complexity of consensus protocols and their rapid and dramatic evolution makes it hard to contextualize the design landscape. We address this challenge by conducting a systematization of knowledge of blockchain consensus protocols. After first discussing key themes in classical consensus protocols, we describe: (i) protocols based on proof-of-work; (ii) proof-of-X protocols that replace proof-of-work with more energy-efficient alternatives; and (iii) hybrid protocols that are compositions or variations of classical consensus protocols. This survey is guided by a systematization framework we develop, to highlight the various building blocks of blockchain consensus design, along with a discussion on their security and performance properties. We identify research gaps and insights for the community to consider in future research endeavours.},
  isbn = {978-1-4503-6732-5}
}

@article{ben-sasson2018ScalableTransparentPostquantum,
  title = {Scalable, transparent, and post-quantum secure computational integrity},
  shorttitle = {{{STARKS}}},
  author = {{Ben-Sasson}, Eli and Bentov, Iddo and Horesh, Y. and Riabzev, Michael},
  year = {2018},
  journal = {IACR Cryptol. ePrint Arch.},
  volume = {2018},
  pages = {46},
  urldate = {2025-05-16},
  abstract = {Human dignity demands that personal information, like medical and forensic data, be hidden from the public. But veils of secrecy designed to preserve privacy may also be abused to cover up lies and deceit by parties entrusted with Data, unjustly harming citizens and eroding trust in central institutions. Zero knowledge (ZK) proof systems are an ingenious cryptographic solution to the tension between the ideals of personal privacy and institutional integrity, enforcing the latter in a way that does not compromise the former. Public trust demands transparency from ZK systems, meaning they be set up with no reliance on any trusted party, and have no trapdoors that could be exploited by powerful parties to bear false witness. For ZK systems to be used with Big Data, it is imperative that the public verification process scale sublinearly in data size. Transparent ZK proofs that can be verified exponentially faster than data size were first described in the 1990s but early constructions were impractical, and no ZK system realized thus far in code (including that used by crypto-currencies like ZcashTM) has achieved both transparency and exponential verification speedup, simultaneously, for general computations. Here we report the first realization of a transparent ZK system (ZK-STARK) in which verification scales exponentially faster than database size, and moreover, this exponential speedup in verification is observed concretely for meaningful and sequential computations, described next. Our system uses several recent advances on interactive oracle proofs (IOP), such as a ``fast'' (linear time) IOP system for error correcting codes. Our proof-of-concept system allows the Police to prove to the public that the DNA profile of a Presidential Candidate does not appear in the forensic DNA profile database maintained by the Police. The proof, which is generated by the Police, relies on no external trusted party, and reveals no further information about the contents of the database, nor about the candidate's profile; in particular, no DNA information is disclosed to any party outside the Police. The proof is shorter than the size of the DNA database, and verified faster than the time needed to examine that database na{\"{\i}}vely. *Technion --- Israel Institute of Technology, Haifa, Israel; supported by the Israel Science Foundation (grant \# 1501/14), the US--Israel Binational Science Foundation, and the European Union's Horizon 2020 Research And Innovation Programme under grant agreement no. 693423. Cornell University, Ithaca, NY, USA.}
}

@misc{casper2025AIAgentIndex,
  title = {The {{AI Agent Index}}},
  shorttitle = {Survey of {{Deployed AI Agents}}},
  author = {Casper, Stephen and Bailey, Luke and Hunter, Rosco and Ezell, Carson and Cabal{\'e}, Emma and Gerovitch, Michael and Slocum, Stewart and Wei, Kevin and Jurkovic, Nikola and Khan, Ariba and Christoffersen, Phillip J. K. and Ozisik, A. Pinar and Trivedi, Rakshit and {Hadfield-Menell}, Dylan and Kolt, Noam},
  year = {2025},
  month = feb,
  number = {arXiv:2502.01635},
  eprint = {2502.01635},
  primaryclass = {cs},
  publisher = {arXiv},
  doi = {10.48550/arXiv.2502.01635},
  urldate = {2025-04-23},
  abstract = {Leading AI developers and startups are increasingly deploying agentic AI systems that can plan and execute complex tasks with limited human involvement. However, there is currently no structured framework for documenting the technical components, intended uses, and safety features of agentic systems. To fill this gap, we introduce the AI Agent Index, the first public database to document information about currently deployed agentic AI systems. For each system that meets the criteria for inclusion in the index, we document the system's components (e.g., base model, reasoning implementation, tool use), application domains (e.g., computer use, software engineering), and risk management practices (e.g., evaluation results, guardrails), based on publicly available information and correspondence with developers. We find that while developers generally provide ample information regarding the capabilities and applications of agentic systems, they currently provide limited information regarding safety and risk management practices. The AI Agent Index is available online at https://aiagentindex.mit.edu/},
  archiveprefix = {arXiv},
  file = {/Users/gurak/Zotero/storage/UY6MZRBN/Casper et al. - 2025 - The AI Agent Index.pdf;/Users/gurak/Zotero/storage/QJPDVEQ6/2502.html;/Users/gurak/Zotero/storage/R2WTIBJN/aiagentindex.mit.edu.html}
}

@misc{chan2024IDsAISystems,
  title = {{{IDs}} for {{AI Systems}}},
  shorttitle = {{{ID}} for {{AI Systems}}},
  author = {Chan, Alan and Kolt, Noam and Wills, Peter and Anwar, Usman and de Witt, Christian Schroeder and Rajkumar, Nitarshan and Hammond, Lewis and Krueger, David and Heim, Lennart and Anderljung, Markus},
  year = {2024},
  month = oct,
  number = {arXiv:2406.12137},
  eprint = {2406.12137},
  primaryclass = {cs},
  publisher = {arXiv},
  doi = {10.48550/arXiv.2406.12137},
  urldate = {2025-04-25},
  abstract = {AI systems are increasingly pervasive, yet information needed to decide whether and how to engage with them may not exist or be accessible. A user may not be able to verify whether a system has certain safety certifications. An investigator may not know whom to investigate when a system causes an incident. It may not be clear whom to contact to shut down a malfunctioning system. Across a number of domains, IDs address analogous problems by identifying particular entities (e.g., a particular Boeing 747) and providing information about other entities of the same class (e.g., some or all Boeing 747s). We propose a framework in which IDs are ascribed to instances of AI systems (e.g., a particular chat session with Claude 3), and associated information is accessible to parties seeking to interact with that system. We characterize IDs for AI systems, provide concrete examples where IDs could be useful, argue that there could be significant demand for IDs from key actors, analyze how those actors could incentivize ID adoption, explore a potential implementation of our framework for deployers of AI systems, and highlight limitations and risks. IDs seem most warranted in settings where AI systems could have a large impact upon the world, such as in making financial transactions or contacting real humans. With further study, IDs could help to manage a world where AI systems pervade society.},
  archiveprefix = {arXiv},
  file = {/Users/gurak/Zotero/storage/VZPECF6V/Chan et al. - 2024 - IDs for AI Systems.pdf;/Users/gurak/Zotero/storage/Q6SGHJER/2406.html}
}

@misc{chan2024VisibilityAIAgents,
  title = {Visibility into {{AI Agents}}},
  shorttitle = {Visibility into {{AI Agents}}},
  author = {Chan, Alan and Ezell, Carson and Kaufmann, Max and Wei, Kevin and Hammond, Lewis and Bradley, Herbie and Bluemke, Emma and Rajkumar, Nitarshan and Krueger, David and Kolt, Noam and Heim, Lennart and Anderljung, Markus},
  year = {2024},
  month = may,
  number = {arXiv:2401.13138},
  eprint = {2401.13138},
  primaryclass = {cs},
  publisher = {arXiv},
  doi = {10.48550/arXiv.2401.13138},
  urldate = {2025-05-13},
  abstract = {Increased delegation of commercial, scientific, governmental, and personal activities to AI agents -- systems capable of pursuing complex goals with limited supervision -- may exacerbate existing societal risks and introduce new risks. Understanding and mitigating these risks involves critically evaluating existing governance structures, revising and adapting these structures where needed, and ensuring accountability of key stakeholders. Information about where, why, how, and by whom certain AI agents are used, which we refer to as visibility, is critical to these objectives. In this paper, we assess three categories of measures to increase visibility into AI agents: agent identifiers, real-time monitoring, and activity logging. For each, we outline potential implementations that vary in intrusiveness and informativeness. We analyze how the measures apply across a spectrum of centralized through decentralized deployment contexts, accounting for various actors in the supply chain including hardware and software service providers. Finally, we discuss the implications of our measures for privacy and concentration of power. Further work into understanding the measures and mitigating their negative impacts can help to build a foundation for the governance of AI agents.},
  archiveprefix = {arXiv},
  file = {/Users/gurak/Zotero/storage/YC8AU6BY/Chan et al. - 2024 - Visibility into AI Agents.pdf;/Users/gurak/Zotero/storage/VC9G2R9J/2401.html}
}

@article{Chen2022RobotCourt,
  title = {Having your day in robot court},
  shorttitle = {Robots are (not) {{Fair}} as {{Judges}}?},
  author = {Chen, Benjamin Minhao and Stremitzer, Alexander and Tobia, Kevin},
  year = 2022,
  journal = {Harvard Journal of Law \& Technology},
  volume = {36},
  number = {1},
  pages = {128--169},
  file = {/Users/gurak/Zotero/storage/SMWVENAK/Chen et al. - Having Your Day in Robot Court.pdf}
}

@inproceedings{chen2024ZKMLOptimizingSystem,
  title = {{{ZKML}}: {{An Optimizing System}} for {{ML Inference}} in {{Zero-Knowledge Proofs}}},
  shorttitle = {{{ZKMLDaniel Kang GPT-2}}},
  booktitle = {Proceedings of the {{Nineteenth European Conference}} on {{Computer Systems}}},
  author = {Chen, Bing-Jyue and Waiwitlikhit, Suppakit and Stoica, Ion and Kang, Daniel},
  year = {2024},
  month = apr,
  series = {{{EuroSys}} '24},
  pages = {560--574},
  publisher = {Association for Computing Machinery},
  address = {New York, NY, USA},
  doi = {10.1145/3627703.3650088},
  urldate = {2025-05-16},
  abstract = {Machine learning (ML) is increasingly used behind closed systems and APIs to make important decisions. For example, social media uses ML-based recommendation algorithms to decide what to show users, and millions of people pay to use ChatGPT for information every day. Because ML is deployed behind these closed systems, there are increasing calls for transparency, such as releasing model weights. However, these service providers have legitimate reasons not to release this information, including for privacy and trade secrets. To bridge this gap, recent work has proposed using zero-knowledge proofs (specifically a form called ZK-SNARKs) for certifying computation with private models but has only been applied to unrealistically small models.In this work, we present the first framework, ZKML, to produce ZK-SNARKs for realistic ML models, including state-of-the-art vision models, a distilled GPT-2, and the ML model powering Twitter's recommendations. We accomplish this by designing an optimizing compiler from TensorFlow to circuits in the halo2 ZK-SNARK proving system. There are many equivalent ways to implement the same operations within ZK-SNARK circuits, and these design choices can affect performance by 24{\texttimes}. To efficiently compile ML models, ZKML contains two parts: gadgets (efficient constraints for low-level operations) and an optimizer to decide how to lay out the gadgets within a circuit. Combined, these optimizations enable proving on a wider range of models, faster proving, faster verification, and smaller proofs compared to prior work.},
  isbn = {979-8-4007-0437-6},
  file = {/Users/gurak/Zotero/storage/XUFDGXL8/Chen et al. - 2024 - ZKML An Optimizing System for ML Inference in Zero-Knowledge Proofs.pdf}
}

@misc{conway2024OpMLOptimisticMachine,
  title = {{{opML}}: {{Optimistic Machine Learning}} on {{Blockchain}}},
  shorttitle = {{{opML}}},
  author = {Conway, K. D. and So, Cathie and Yu, Xiaohang and Wong, Kartin},
  year = {2024},
  month = feb,
  number = {arXiv:2401.17555},
  eprint = {2401.17555},
  primaryclass = {cs},
  publisher = {arXiv},
  doi = {10.48550/arXiv.2401.17555},
  urldate = {2025-05-16},
  abstract = {The integration of machine learning with blockchain technology has witnessed increasing interest, driven by the vision of decentralized, secure, and transparent AI services. In this context, we introduce opML (Optimistic Machine Learning on chain), an innovative approach that empowers blockchain systems to conduct AI model inference. opML lies a interactive fraud proof protocol, reminiscent of the optimistic rollup systems. This mechanism ensures decentralized and verifiable consensus for ML services, enhancing trust and transparency. Unlike zkML (Zero-Knowledge Machine Learning), opML offers cost-efficient and highly efficient ML services, with minimal participation requirements. Remarkably, opML enables the execution of extensive language models, such as 7B-LLaMA, on standard PCs without GPUs, significantly expanding accessibility. By combining the capabilities of blockchain and AI through opML, we embark on a transformative journey toward accessible, secure, and efficient on-chain machine learning.},
  archiveprefix = {arXiv},
  file = {/Users/gurak/Zotero/storage/Y5AYYC7U/Conway et al. - 2024 - opML Optimistic Machine Learning on Blockchain.pdf;/Users/gurak/Zotero/storage/TVLPHCYJ/2401.html}
}

@article{costan2016IntelSGXExplained,
  title = {Intel {{SGX Explained}}},
  shorttitle = {Intel {{SGX Explained}}},
  author = {Costan, Victor and Devadas, S.},
  year = {2016},
  journal = {IACR Cryptol. ePrint Arch.},
  volume = {2016},
  pages = {86},
  urldate = {2025-05-16},
  abstract = {Intel's Software Guard Extensions (SGX) is a set of extensions to the Intel architecture that aims to provide integrity and confidentiality guarantees to securitysensitive computation performed on a computer where all the privileged software (kernel, hypervisor, etc) is potentially malicious. This paper analyzes Intel SGX, based on the 3 papers [14, 78, 137] that introduced it, on the Intel Software Developer's Manual [100] (which supersedes the SGX manuals [94, 98]), on an ISCA 2015 tutorial [102], and on two patents [108, 136]. We use the papers, reference manuals, and tutorial as primary data sources, and only draw on the patents to fill in missing information. This paper's contributions are a summary of the Intel-specific architectural and micro-architectural details needed to understand SGX, a detailed and structured presentation of the publicly available information on SGX, a series of intelligent guesses about some important but undocumented aspects of SGX, and an analysis of SGX's security properties.}
}

@inproceedings{el-hindi2022BenchmarkingSecondGeneration,
  title = {Benchmarking the {{Second Generation}} of {{Intel SGX Hardware}}},
  shorttitle = {Benchmarking {{SGXv2}}},
  booktitle = {Proceedings of the 18th {{International Workshop}} on {{Data Management}} on {{New Hardware}}},
  author = {{El-Hindi}, Muhammad and Ziegler, Tobias and Heinrich, Matthias and Lutsch, Adrian and Zhao, Zheguang and Binnig, Carsten},
  year = {2022},
  month = jun,
  series = {{{DaMoN}} '22},
  pages = {1--8},
  publisher = {Association for Computing Machinery},
  address = {New York, NY, USA},
  doi = {10.1145/3533737.3535098},
  urldate = {2025-05-16},
  abstract = {In recent years, trusted execution environments (TEEs) such as Intel Software Guard Extensions (SGX) have gained a lot of attention in the database community. This is because TEEs provide an interesting platform for building trusted databases in the cloud. However, until recently SGX was only available on low-end single socket servers built on the Intel Xeon E3 processor generation and came with many restrictions for building DBMSs. With the availability of the new Ice Lake processors, Intel provides a new implementation of the SGX technology that supports high-end multi-socket servers. With this new implementation, which we refer to as SGXv2 in this paper, Intel promises to address several limitations of SGX enclaves. This raises the question whether previous efforts to overcome the limitations of SGX for DBMSs are still applicable and if the new generation of SGX can truly deliver on the promise to secure data without compromising on performance. To answer this question, in this paper we conduct a first systematic performance study of Intel SGXv2 and compare it to the previous generation of SGX.},
  isbn = {978-1-4503-9378-2},
  file = {/Users/gurak/Zotero/storage/Z6QAZG96/El-Hindi et al. - 2022 - Benchmarking the Second Generation of Intel SGX Hardware.pdf}
}

@misc{FeatAddedNew,
  title = {feat: added new plugin - zktls - reclaim by {{Gajesh2007}} {$\cdot$} {{Pull Request}} \#1558 {$\cdot$} {{elizaOS}}/eliza},
  shorttitle = {{{ElizaOS zkTLS}} 2 ({{RECLAIM}})},
  author = {Gajesh2007},
  year = {2025},
  urldate = {2025-04-25},
  howpublished = {https://github.com/elizaOS/eliza/pull/1558},
  langid = {english},
  file = {/Users/gurak/Zotero/storage/P7ZUC69K/1558.html}
}

@misc{fernandez-becerra2024EnhancingTrustAutonomous,
  title = {Enhancing {{Trust}} in {{Autonomous Agents}}: {{An Architecture}} for {{Accountability}} and {{Explainability}} through {{Blockchain}} and {{Large Language Models}}},
  shorttitle = {Enhancing {{Trust}} in {{Autonomous Agents}}},
  author = {{Fern{\'a}ndez-Becerra}, Laura and {Gonz{\'a}lez-Santamarta}, Miguel {\'A}ngel and {Guerrero-Higueras}, {\'A}ngel Manuel and {Rodr{\'i}guez-Lera}, Francisco Javier and Olivera, Vicente Matell{\'a}n},
  year = {2024},
  month = mar,
  number = {arXiv:2403.09567},
  eprint = {2403.09567},
  primaryclass = {cs},
  publisher = {arXiv},
  doi = {10.48550/arXiv.2403.09567},
  urldate = {2025-08-25},
  abstract = {The deployment of autonomous agents in environments involving human interaction has increasingly raised security concerns. Consequently, understanding the circumstances behind an event becomes critical, requiring the development of capabilities to justify their behaviors to non-expert users. Such explanations are essential in enhancing trustworthiness and safety, acting as a preventive measure against failures, errors, and misunderstandings. Additionally, they contribute to improving communication, bridging the gap between the agent and the user, thereby improving the effectiveness of their interactions. This work presents an accountability and explainability architecture implemented for ROS-based mobile robots. The proposed solution consists of two main components. Firstly, a black box-like element to provide accountability, featuring anti-tampering properties achieved through blockchain technology. Secondly, a component in charge of generating natural language explanations by harnessing the capabilities of Large Language Models (LLMs) over the data contained within the previously mentioned black box. The study evaluates the performance of our solution in three different scenarios, each involving autonomous agent navigation functionalities. This evaluation includes a thorough examination of accountability and explainability metrics, demonstrating the effectiveness of our approach in using accountable data from robot actions to obtain coherent, accurate and understandable explanations, even when facing challenges inherent in the use of autonomous agents in real-world scenarios.},
  archiveprefix = {arXiv},
  file = {/Users/gurak/Zotero/storage/9EGEB57B/Fernández-Becerra et al. - 2024 - Enhancing Trust in Autonomous Agents An Architecture for Accountability and Explainability through.pdf;/Users/gurak/Zotero/storage/A8PWLBPA/2403.html}
}

@article{gabizon2019PLONKPermutationsLagrangebasesa,
  title = {{{PLONK}}: {{Permutations}} over {{Lagrange-bases}} for {{Oecumenical Noninteractive}} arguments of {{Knowledge}}},
  shorttitle = {{{PLONK}}},
  author = {Gabizon, Ariel and Williamson, Zachary J. and Ciobotaru, Oana-Madalina},
  year = {2019},
  journal = {IACR Cryptol. ePrint Arch.},
  volume = {2019},
  pages = {953},
  urldate = {2025-05-16},
  abstract = {zk-SNARK constructions that utilize an updatable universal structured reference string remove one of the main obstacles in deploying zk-SNARKs[GKM + ]. The important work of Maller et al. [MBKM19] presented Sonic - the first potentially practical zk-SNARK with fully succinct verification for general arithmetic circuits with such an SRS. However, the version of Sonic enabling fully succinct verification still requires relatively high proof construction overheads. We present a universal SNARK construction with fully succinct verification, and significantly lower prover running time (roughly 7.5-20 times fewer group exponentiations than [MBKM19] in the fully succinct verifier mode depending on circuit structure). Similarly to [MBKM19] we rely on a permutation argument based on Bayer and Groth [BG12]. However, we focus on ``Evaluations on a subgroup rather than coefficients of monomials''; which enables simplifying both the permutation argument and the arithmetization step.}
}

@inproceedings{gennaro2013QuadraticSpanPrograms,
  title = {Quadratic {{Span Programs}} and {{Succinct NIZKs}} without {{PCPs}}},
  shorttitle = {{{SNARKS Initial}}},
  booktitle = {Advances in {{Cryptology}} -- {{EUROCRYPT}} 2013},
  author = {Gennaro, Rosario and Gentry, Craig and Parno, Bryan and Raykova, Mariana},
  editor = {Johansson, Thomas and Nguyen, Phong Q.},
  year = {2013},
  pages = {626--645},
  publisher = {Springer},
  address = {Berlin, Heidelberg},
  doi = {10.1007/978-3-642-38348-9_37},
  abstract = {We introduce a new characterization of the NP complexity class, called Quadratic Span Programs (QSPs), which is a natural extension of span programs defined by Karchmer and Wigderson. Our main motivation is the quick construction of succinct, easily verified arguments for NP statements.},
  isbn = {978-3-642-38348-9},
  langid = {english},
  file = {/Users/gurak/Zotero/storage/GDMHYXYI/Gennaro et al. - 2013 - Quadratic Span Programs and Succinct NIZKs without PCPs.pdf}
}

@inproceedings{goldwasser1985KnowledgeComplexityInteractive,
  title = {The knowledge complexity of interactive proof-systems},
  shorttitle = {Foundational {{ZKP Work}}},
  booktitle = {Proceedings of the seventeenth annual {{ACM}} symposium on {{Theory}} of computing},
  author = {Goldwasser, S and Micali, S and Rackoff, C},
  year = {1985},
  month = dec,
  series = {{{STOC}} '85},
  pages = {291--304},
  publisher = {Association for Computing Machinery},
  address = {New York, NY, USA},
  doi = {10.1145/22145.22178},
  urldate = {2025-05-16},
  isbn = {978-0-89791-151-1},
  file = {/Users/gurak/Zotero/storage/RYESPZ7W/Goldwasser et al. - 1985 - The knowledge complexity of interactive proof-systems.pdf}
}

@article{groth2016SizePairingBasedNoninteractive,
  title = {On the {{Size}} of {{Pairing-Based Non-interactive Arguments}}},
  shorttitle = {Practical {{SNARKs}}},
  author = {Groth, Jens},
  editor = {Fischlin, Marc and Coron, Jean-S{\'e}bastien},
  year = {2016},
  journal = {EUROCRYPT},
  volume = {9666},
  pages = {305--326},
  publisher = {Springer Berlin Heidelberg},
  address = {Berlin, Heidelberg},
  doi = {10.1007/978-3-662-49896-5_11},
  urldate = {2025-05-16},
  abstract = {Non-interactive arguments enable a prover to convince a verifier that a statement is true. Recently there has been a lot of progress both in theory and practice on constructing highly efficient non-interactive arguments with small size and low verification complexity, so-called succinct non-interactive arguments SNARGs and succinct non-interactive arguments of knowledge SNARKs.    Many constructions of SNARGs rely on pairing-based cryptography. In these constructions a proof consists of a number of group elements and the verification consists of checking a number of pairing product equations. The question we address in this article is how efficient pairing-based SNARGs can be.    Our first contribution is a pairing-based preprocessing SNARK for arithmetic circuit satisfiability, which is an NP-complete language. In our SNARK we work with asymmetric pairings for higher efficiency, a proof is only 3 group elements, and verification consists of checking a single pairing product equations using 3 pairings ini{$\frac{3}{4}$}?total. Our SNARK is zero-knowledge and does not reveal anything about the witness the prover uses to make the proof.    As our second contribution we answer an open question of Bitansky, Chiesa, Ishai, Ostrovsky and Paneth TCC 2013 by showing that linear interactive proofs cannot have a linear decision procedure. It follows from this that SNARGs where the prover and verifier use generic asymmetric bilinear group operations cannot consist of a single group element. This gives the first lower bound for pairing-based SNARGs. It remains an intriguing open problem whether this lower bound can be extended to rule out 2 group element SNARGs, which would prove optimality of our 3 element construction.},
  isbn = {9783662498958 9783662498965},
  langid = {english}
}

@article{Horton2024ForcedRobotArbitration,
  title = {Forced robot arbitration},
  shorttitle = {Agents for {{Companies Arbitration}}},
  author = {Horton, David},
  year = {2024},
  month = apr,
  journal = {Cornell Law Review},
  volume = {109},
  number = {3},
  pages = {679--734},
  file = {/Users/gurak/Zotero/storage/D3BR4QW8/Horton - Forced Robot Arbitration.pdf}
}

@misc{IntroducingOperatorOpenAI,
  title = {Introducing {{Operator}} {\textbar} {{OpenAI}}},
  shorttitle = {{{OpenAI Operator}}},
  author = {OpenAI, Blog},
  year = {2025},
  urldate = {2025-08-26},
  howpublished = {https://openai.com/index/introducing-operator/},
  file = {/Users/gurak/Zotero/storage/BUHGGSBW/introducing-operator.html}
}

@misc{IntroducingWT3Trustless,
  title = {Introducing {{WT3}}: {{The Trustless DeFi Trading Agent}}},
  shorttitle = {Introducing {{WT3}}},
  author = {OASIS Network},
  year = {2025},
  urldate = {2025-05-16},
  abstract = {The Oasis Protocol Foundation is excited to introduce WT3, a verifiable AI trading agent running on the Oasis TEE cloud.},
  howpublished = {https://oasis.net/blog/wt3-trustless-defi-trading-agent},
  langid = {english},
  file = {/Users/gurak/Zotero/storage/JYUSK486/wt3-trustless-defi-trading-agent.html}
}

@misc{joyce2024_ai_impersonation,
  title        = {Evidence of developer impersonation of {AI} messages in raiba ai's logging system},
  shorttitle   = {Developer impersonative {AI Agent}},
  author       = {Joyce and Slipstream},
  year         = {2024},
  month        = oct,
  howpublished = {Post on X (formerly Twitter)},
  url          = {https://x.com/sIipstream11/status/1847108619856498751},
  note         = {Accessed: 2025-12-14}
}

@misc{kalka2024ComprehensiveReviewTLSNotary,
  title = {A {{Comprehensive Review}} of {{TLSNotary Protocol}}},
  shorttitle = {A {{Comprehensive Review}} of {{TLSNotary Protocol}}},
  author = {Kalka, Maciej and Kirejczyk, Marek},
  year = {2024},
  month = sep,
  number = {arXiv:2409.17670},
  eprint = {2409.17670},
  primaryclass = {cs},
  publisher = {arXiv},
  doi = {10.48550/arXiv.2409.17670},
  urldate = {2025-04-17},
  abstract = {Transport Layer Security (TLS) protocol is a cryptographic protocol designed to secure communication over the internet. The TLS protocol has become a fundamental in secure communication, most commonly used for securing web browsing sessions. In this work, we investigate the TLSNotary protocol, which aim to enable the Client to obtain proof of provenance for data from TLS session, while getting as much as possible from the TLS security properties. To achieve such proofs without any Server-side adjustments or permissions, the power of secure multi-party computation (MPC) together with zero knowledge proofs is used to extend the standard TLS Protocol. To make the compliacted landscape of MPC as comprehensible as possible we first introduce the cryptographic primitives required to understand the TLSNotary protocol and go through standard TLS protocol. Finally, we look at the TLSNotary protocol in detail.},
  archiveprefix = {arXiv},
  file = {/Users/gurak/Zotero/storage/MWW6BLJL/Kalka and Kirejczyk - 2024 - A Comprehensive Review of TLSNotary Protocol.pdf;/Users/gurak/Zotero/storage/6NL7CS4B/2409.html}
}

@article{lampson1992AuthenticationDistributedSystems,
  title = {Authentication in distributed systems: theory and practice},
  shorttitle = {Authentication in distributed systems},
  author = {Lampson, Butler and Abadi, Mart{\'i}n and Burrows, Michael and Wobber, Edward},
  year = {1992},
  month = nov,
  journal = {ACM Trans. Comput. Syst.},
  volume = {10},
  number = {4},
  pages = {265--310},
  issn = {0734-2071},
  doi = {10.1145/138873.138874},
  urldate = {2025-08-25},
  abstract = {We describe a theory of authentication and a system that implements it. Our theory is based on the notion of principal and a ``speaks for'' relation between principals. A simple principal either has a name or is a communication channel; a compound principal can express an adopted role or delegated authority. The theory shows how to reason about a principal's authority by deducing the other principals that it can speak for; authenticating a channel is one important application. We use the theory to explain many existing and proposed security mechanisms. In particular, we describe the system we have built. It passes principals efficiently as arguments or results of remote procedure calls, and it handles public and shared key encryption, name lookup in a large name space, groups of principals, program loading, delegation, access control, and revocation.},
  file = {/Users/gurak/Zotero/storage/FIW9EMVL/Lampson et al. - 1992 - Authentication in distributed systems theory and practice.pdf}
}

@misc{LaunchElizaV2,
  title = {Launch {{Eliza V2 Beta Agent Swarms}} with {{TEE Security}} on {{Phala Cloud}}},
  shorttitle = {{{elizaOSv2}} in a {{TEE}} ({{Phala}})},
  author = {ElizaOS},
  year = {2025},
  urldate = {2025-04-24},
  file = {/Users/gurak/Zotero/storage/W8E98DTE/launch-eliza-v2-beta-agent-swarms-with-tee-security-on-phala-cloud.html}
}

@inproceedings{liu2023webarena,
  title = {{{WebArena}}: a realistic web environment for building autonomous agents},
  shorttitle = {{{AI Agents}} surfing the {{Web}}},
  booktitle = {Proceedings of the thirty-seventh conference on neural information processing systems ({{NeurIPS}} 2023), datasets and benchmarks track},
  author = {Liu, Andy and Lakhotia, Kushal and Wang, Xiao and Zettlemoyer, Luke and Kwiatkowski, Tom},
  year = {2023},
  volume = {2023},
  eprint = {2307.13313},
  primaryclass = {cs.LG},
  pages = {22},
  publisher = {NeuroIPS},
  address = {1269 Law Street, San Diego CA 92109},
  archiveprefix = {arXiv},
  file = {/Users/gurak/Zotero/storage/PV4KYQAH/Zhou et al. - 2024 - WebArena A Realistic Web Environment for Building Autonomous Agents.pdf;/Users/gurak/Zotero/storage/X5RT9VRS/2307.html}
}

@inproceedings{luu2015DemystifyingIncentivesConsensus,
  title = {Demystifying {{Incentives}} in the {{Consensus Computer}}},
  shorttitle = {About {{Consensus}}},
  booktitle = {Proceedings of the 22nd {{ACM SIGSAC Conference}} on {{Computer}} and {{Communications Security}}},
  author = {Luu, Loi and Teutsch, Jason and Kulkarni, Raghav and Saxena, Prateek},
  year = {2015},
  month = oct,
  series = {{{CCS}} '15},
  pages = {706--719},
  publisher = {Association for Computing Machinery},
  address = {New York, NY, USA},
  doi = {10.1145/2810103.2813659},
  urldate = {2025-08-25},
  abstract = {Cryptocurrencies like Bitcoin and the more recent Ethereum system allow users to specify scripts in transactions and contracts to support applications beyond simple cash transactions. In this work, we analyze the extent to which these systems can enforce the correct semantics of scripts. We show that when a script execution requires nontrivial computation effort, practical attacks exist which either waste miners' computational resources or lead miners to accept incorrect script results. These attacks drive miners to an ill-fated choice, which we call the verifier's dilemma, whereby rational miners are well-incentivized to accept unvalidated blockchains. We call the framework of computation through a scriptable cryptocurrency a consensus computer and develop a model that captures incentives for verifying computation in it. We propose a resolution to the verifier's dilemma which incentivizes correct execution of certain applications, including outsourced computation, where scripts require minimal time to verify. Finally we discuss two distinct, practical implementations of our consensus computer in real cryptocurrency networks like Ethereum.},
  isbn = {978-1-4503-3832-5}
}

@misc{malhotra2024SettingYourPet,
  title = {Setting {{Your Pet Rock Free}}.},
  shorttitle = {Free {{Pet Rock}}},
  author = {Malhotra, Karan},
  year = {2024},
  month = oct,
  journal = {NOUS RESEARCH},
  urldate = {2025-04-25},
  abstract = {A social experiment on how to deploy provably, fully-autonomous thinking sand. The quest for truly autonomous AI agents faces a fundamental challenge: how can researchers prove that an AI is truly autonomous, with no human pulling the strings besides setting up the system and its security checkpoints in the first place?~ This post presents TEE\_HEE, [{\dots}]},
  howpublished = {https://nousresearch.com/setting-your-pet-rock-free/},
  langid = {american},
  file = {/Users/gurak/Zotero/storage/687ZW39W/setting-your-pet-rock-free.html}
}

@misc{MCPIDocumentation,
  title = {{{MCP-I Documentation}}},
  shorttitle = {{{MCP Identity}}},
  author = {Vouched},
  year = {2025},
  urldate = {2025-05-30},
  abstract = {Verifiable identity and delegation for AI agents --- enforceable at the edge, compatible with the web.},
  howpublished = {https://modelcontextprotocol-identity.io/},
  langid = {english},
  file = {/Users/gurak/Zotero/storage/C7PDD3WC/modelcontextprotocol-identity.io.html}
}

@misc{miyahedge2024_ai_manipulation,
  title        = {Discussion on {AI} agent vulnerabilities: Manipulation risks from adversarial attacks and owner bias},
  shorttitle   = {Agent is under {Control}?},
  author       = {Hedge, Miya},
  year         = {2024},
  month        = oct,
  howpublished = {Post on X (formerly Twitter)},
  url          = {https://x.com/MiyaHedge/status/1849117352664715489},
  note         = {Accessed: 2025-12-14}
}

@article{mnih2013PlayingAtariDeep,
  title = {Playing {{Atari}} with {{Deep Reinforcement Learning}}},
  shorttitle = {Agents {{Playing Attari Games}}},
  author = {Mnih, Volodymyr and Kavukcuoglu, K. and Silver, David and Graves, Alex and Antonoglou, Ioannis and Wierstra, D. and Riedmiller, Martin A.},
  year = {2013},
  month = dec,
  journal = {ArXiv},
  volume = {abs/1312.5602},
  pages = {22},
  urldate = {2025-08-25},
  abstract = {We present the first deep learning model to successfully learn control policies directly from high-dimensional sensory input using reinforcement learning. The model is a convolutional neural network, trained with a variant of Q-learning, whose input is raw pixels and whose output is a value function estimating future rewards. We apply our method to seven Atari 2600 games from the Arcade Learning Environment, with no adjustment of the architecture or learning algorithm. We find that it outperforms all previous approaches on six of the games and surpasses a human expert on three of them.},
  file = {/Users/gurak/Zotero/storage/PJG2Q7XC/Mnih et al. - 2013 - Playing Atari with Deep Reinforcement Learning.pdf}
}

@techreport{nilsson1984shakey,
  title = {Shakey the robot},
  shorttitle = {Old {{Agent Example}}},
  author = {Nilsson, Nils J.},
  year = {1984},
  month = apr,
  number = {Tech Note 323, DTIC Document No. ADA458918},
  institution = {SRI International, Artificial Intelligence Center},
  file = {/Users/gurak/Zotero/storage/B7PKMGZW/ADA458918.pdf}
}

@misc{peng2025SurveyZeroKnowledgeProof,
  title = {A {{Survey}} of {{Zero-Knowledge Proof Based Verifiable Machine Learning}}},
  shorttitle = {{{ZKML Survey}}},
  author = {Peng, Zhizhi and Wang, Taotao and Zhao, Chonghe and Liao, Guofu and Lin, Zibin and Liu, Yifeng and Cao, Bin and Shi, Long and Yang, Qing and Zhang, Shengli},
  year = {2025},
  month = feb,
  number = {arXiv:2502.18535},
  eprint = {2502.18535},
  primaryclass = {cs},
  publisher = {arXiv},
  doi = {10.48550/arXiv.2502.18535},
  urldate = {2025-04-25},
  abstract = {As machine learning technologies advance rapidly across various domains, concerns over data privacy and model security have grown significantly. These challenges are particularly pronounced when models are trained and deployed on cloud platforms or third-party servers due to the computational resource limitations of users' end devices. In response, zero-knowledge proof (ZKP) technology has emerged as a promising solution, enabling effective validation of model performance and authenticity in both training and inference processes without disclosing sensitive data. Thus, ZKP ensures the verifiability and security of machine learning models, making it a valuable tool for privacy-preserving AI. Although some research has explored the verifiable machine learning solutions that exploit ZKP, a comprehensive survey and summary of these efforts remain absent. This survey paper aims to bridge this gap by reviewing and analyzing all the existing Zero-Knowledge Machine Learning (ZKML) research from June 2017 to December 2024. We begin by introducing the concept of ZKML and outlining its ZKP algorithmic setups under three key categories: verifiable training, verifiable inference, and verifiable testing. Next, we provide a comprehensive categorization of existing ZKML research within these categories and analyze the works in detail. Furthermore, we explore the implementation challenges faced in this field and discuss the improvement works to address these obstacles. Additionally, we highlight several commercial applications of ZKML technology. Finally, we propose promising directions for future advancements in this domain.},
  archiveprefix = {arXiv},
  file = {/Users/gurak/Zotero/storage/IPR74FM5/Peng et al. - 2025 - A Survey of Zero-Knowledge Proof Based Verifiable Machine Learning.pdf;/Users/gurak/Zotero/storage/Q9WNGZ7B/2502.html}
}

@misc{PerformanceConsiderationsHardwareIsolated,
  title = {Performance {{Considerations}} of {{Hardware-Isolated Partitioned VMs}}},
  shorttitle = {Intel {{TDX Performance}}},
  author = {Intel},
  year = {2023},
  month = nov,
  journal = {Intel},
  urldate = {2025-05-16},
  abstract = {Review the design and technology highlights of Intel{\textregistered} Trust Domain Extensions and the performance of partitioned TDs implemented on Microsoft Azure.},
  howpublished = {https://www.intel.com/content/www/us/en/developer/articles/technical/tdx-performance-isolated-partitioned-vms.html},
  langid = {english},
  file = {/Users/gurak/Zotero/storage/C33XNXER/tdx-performance-isolated-partitioned-vms.html}
}

@inproceedings{sailer2004DesignImplementationTCGbased,
  title = {Design and implementation of a {{TCG-based}} integrity measurement architecture},
  shorttitle = {Identity {{Based Measurments}}},
  booktitle = {Proceedings of the 13th conference on {{USENIX Security Symposium}} - {{Volume}} 13},
  author = {Sailer, Reiner and Zhang, Xiaolan and Jaeger, Trent and {van Doorn}, Leendert},
  year = {2004},
  month = aug,
  series = {{{SSYM}}'04},
  pages = {16},
  publisher = {USENIX Association},
  address = {USA},
  urldate = {2025-08-26},
  abstract = {We present the design and implementation of a secure integrity measurement system for Linux. All executable content that is loaded onto the Linux system is measured before execution and these measurements are protected by the Trusted Platform Module (TPM) that is part of the Trusted Computing Group (TCG) standards. Our system is the first to extend the TCG trust measurement concepts to dynamic executable content from the BIOS all the way up into the application layer. In effect, we show that many of the Microsoft NGSCB guarantees can be obtained on today's hardware and today's software and that these guarantees do not require a new CPU mode or operating system but merely depend on the availability of an independent trusted entity, a TPM for example. We apply our trust measurement architecture to a web server application where we show how our system can detect undesirable invocations, such as rootkit programs, and that our measurement architecture is practical in terms of the number of measurements taken and the performance impact of making them.}
}

@misc{schneider2022SoKHardwaresupportedTrusted,
  title = {{{SoK}}: {{Hardware-supported Trusted Execution Environments}}},
  shorttitle = {{{TEE SoK}}},
  author = {Schneider, Moritz and Masti, Ramya Jayaram and Shinde, Shweta and Capkun, Srdjan and Perez, Ronald},
  year = {2022},
  publisher = {arXiv},
  doi = {10.48550/ARXIV.2205.12742},
  urldate = {2025-05-16},
  abstract = {The growing complexity of modern computing platforms and the need for strong isolation protections among their software components has led to the increased adoption of Trusted Execution Environments (TEEs). While several commercial and academic TEE architectures have emerged in recent times, they remain hard to compare and contrast. More generally, existing TEEs have not been subject to a holistic systematization to understand the available design alternatives for various aspects of TEE design and their corresponding pros-and-cons. Therefore, in this work, we analyze the design of existing TEEs and systematize the mechanisms that TEEs implement to achieve their security goals, namely, verifiable launch, run-time isolation, trusted IO, and secure storage. More specifically, we analyze the typical architectural building blocks underlying TEE solutions, design alternatives for each of these components and the trade-offs that they entail. We focus on hardware-assisted TEEs and cover a wide range of TEE proposals from academia and the industry. Our analysis shows that although TEEs are diverse in terms of their goals, usage models, and instruction set architectures, they all share many common building blocks in terms of their design.},
  copyright = {arXiv.org perpetual, non-exclusive license},
  file = {/Users/gurak/Zotero/storage/85QTCSWX/Schneider et al. - 2022 - SoK Hardware-supported Trusted Execution Environments.pdf}
}

@online{phala2025_build_trustworthy_fintech_ai_agents,
  author       = {{Phala Network}},
  title        = {Build Trustworthy Fintech AI Agents With TEE},
  year         = {2025},
  month        = jun,
  day          = {12},
  url          = {https://phala.com/posts/Build-Trustworthy-Fintech-AI-Agents-With-TEE},
  note         = {Accessed: 2025-12-14}
}

@misc{south2025AuthenticatedDelegationAuthorized,
  title = {Authenticated {{Delegation}} and {{Authorized AI Agents}}},
  shorttitle = {Delegation to {{AI Agents}} using {{OAuth Tokens}}},
  author = {South, Tobin and Marro, Samuele and Hardjono, Thomas and Mahari, Robert and Whitney, Cedric Deslandes and Greenwood, Dazza and Chan, Alan and Pentland, Alex},
  year = {2025},
  month = jan,
  number = {arXiv:2501.09674},
  eprint = {2501.09674},
  primaryclass = {cs},
  publisher = {arXiv},
  doi = {10.48550/arXiv.2501.09674},
  urldate = {2025-04-17},
  abstract = {The rapid deployment of autonomous AI agents creates urgent challenges around authorization, accountability, and access control in digital spaces. New standards are needed to know whom AI agents act on behalf of and guide their use appropriately, protecting online spaces while unlocking the value of task delegation to autonomous agents. We introduce a novel framework for authenticated, authorized, and auditable delegation of authority to AI agents, where human users can securely delegate and restrict the permissions and scope of agents while maintaining clear chains of accountability. This framework builds on existing identification and access management protocols, extending OAuth 2.0 and OpenID Connect with agent-specific credentials and metadata, maintaining compatibility with established authentication and web infrastructure. Further, we propose a framework for translating flexible, natural language permissions into auditable access control configurations, enabling robust scoping of AI agent capabilities across diverse interaction modalities. Taken together, this practical approach facilitates immediate deployment of AI agents while addressing key security and accountability concerns, working toward ensuring agentic AI systems perform only appropriate actions and providing a tool for digital service providers to enable AI agent interactions without risking harm from scalable interaction.},
  archiveprefix = {arXiv},
  file = {/Users/gurak/Zotero/storage/7SCY367N/South et al. - 2025 - Authenticated Delegation and Authorized AI Agents.pdf;/Users/gurak/Zotero/storage/KSXHGFFH/2501.html}
}

@misc{taceo2025MultipartyNotariesZkTLS,
  title = {Multiparty {{Notaries}} for {{zkTLS}}},
  shorttitle = {{{MPC Web Proofs}}},
  author = {TACEO},
  year = {2025},
  month = feb,
  journal = {TACEO Blog},
  urldate = {2025-05-19},
  abstract = {A blog about various topics in MPC, ZK and other privacy-preserving technologies, by TACEO.},
  howpublished = {https://blog.taceo.io/mpc-zktls/},
  langid = {english},
  file = {/Users/gurak/Zotero/storage/QF3THM74/mpc-zktls.html}
}

@misc{tlsnotaryteam2025TLSNotary,
  title = {{{TLSNotary}}},
  shorttitle = {{{TLSNotary}}},
  author = {TLSNotary Team},
  year = {2025},
  month = feb,
  urldate = {2025-05-19},
  abstract = {Proof of data authenticity},
  howpublished = {https://tlsnotary.org/},
  langid = {english},
  file = {/Users/gurak/Zotero/storage/6PSPITAL/tlsnotary.org.html}
}

@inproceedings{vanschaik2024SoKSGXFailHow,
  title = {{{SoK}}: {{SGX}}.{{Fail}}: {{How Stuff Gets eXposed}}},
  shorttitle = {{{SGX Fail}}},
  booktitle = {2024 {{IEEE Symposium}} on {{Security}} and {{Privacy}} ({{SP}})},
  author = {Van Schaik, Stephan and Seto, Alex and Yurek, Thomas and Batori, Adam and AlBassam, Bader and Genkin, Daniel and Miller, Andrew and Ronen, Eyal and Yarom, Yuval and Garman, Christina},
  year = {2024},
  month = may,
  pages = {4143--4162},
  publisher = {IEEE},
  address = {CA, USA},
  issn = {2375-1207},
  doi = {10.1109/SP54263.2024.00260},
  urldate = {2025-05-16},
  abstract = {Intel's Software Guard Extensions (SGX) promises an isolated execution environment, protected from all software running on the machine. As such, numerous works have sought to leverage SGX to provide confidentiality and integrity guarantees for code running in adversarial environments. In the past few years however, SGX has come under heavy fire, threatened by numerous hardware attacks. With Intel repeatedly patching SGX to regain security while consistently launching new (micro)architectures, it is increasingly difficult to track the applicability of various attack techniques across the SGX design landscape.Thus, in this paper we set out to survey and categorize various SGX attacks, their applicability to different SGX architectures, as well as the information leaked by them. We then set out to explore the effectiveness of SGX's update mechanisms in preventing attacks on real-world deployments. Here, we study two commercial SGX applications. First, we investigate the SECRET network, an SGX-backed blockchain aiming to provide privacy-preserving smart contracts. Next, we also consider PowerDVD, a UHD Blu-Ray Digital Rights Management (DRM) software licensed to play discs on PCs. We show that in both cases vendors are unable to meet security goals originally envisioned for their products, presumably due to SGX's long update timelines and the complexities of a manual update process. This in turn forces vendors to make difficult security/usability trade offs, resulting in security compromises.},
  file = {/Users/gurak/Zotero/storage/WCWV65Z6/10646750.html}
}

@article{wang2024SurveyLargeLanguage,
  title = {A survey on large language model based autonomous agents},
  shorttitle = {Survey of {{Autonomous Agents}}},
  author = {Wang, Lei and Ma, Chen and Feng, Xueyang and Zhang, Zeyu and Yang, Hao and Zhang, Jingsen and Chen, Zhiyuan and Tang, Jiakai and Chen, Xu and Lin, Yankai and Zhao, Wayne Xin and Wei, Zhewei and Wen, Jirong},
  year = {2024},
  month = mar,
  journal = {Frontiers of Computer Science},
  volume = {18},
  number = {6},
  pages = {186345},
  issn = {2095-2236},
  doi = {10.1007/s11704-024-40231-1},
  urldate = {2025-04-23},
  abstract = {Autonomous agents have long been a research focus in academic and industry communities. Previous research often focuses on training agents with limited knowledge within isolated environments, which diverges significantly from human learning processes, and makes the agents hard to achieve human-like decisions. Recently, through the acquisition of vast amounts of Web knowledge, large language models (LLMs) have shown potential in human-level intelligence, leading to a surge in research on LLM-based autonomous agents. In this paper, we present a comprehensive survey of these studies, delivering a systematic review of LLM-based autonomous agents from a holistic perspective. We first discuss the construction of LLM-based autonomous agents, proposing a unified framework that encompasses much of previous work. Then, we present a overview of the diverse applications of LLM-based autonomous agents in social science, natural science, and engineering. Finally, we delve into the evaluation strategies commonly used for LLM-based autonomous agents. Based on the previous studies, we also present several challenges and future directions in this field.},
  langid = {english},
  file = {/Users/gurak/Zotero/storage/4YSVJ9QW/Wang et al. - 2024 - A survey on large language model based autonomous agents.pdf}
}

@inproceedings{wei2022ChainofthoughtPromptingElicits,
  title = {Chain-of-thought prompting elicits reasoning in large language models},
  shorttitle = {Chain of {{Thought}}},
  booktitle = {Proceedings of the 36th {{International Conference}} on {{Neural Information Processing Systems}}},
  author = {Wei, Jason and Wang, Xuezhi and Schuurmans, Dale and Bosma, Maarten and Ichter, Brian and Xia, Fei and Chi, Ed H. and Le, Quoc V. and Zhou, Denny},
  year = {2022},
  month = nov,
  series = {{{NIPS}} '22},
  pages = {24824--24837},
  publisher = {Curran Associates Inc.},
  address = {Red Hook, NY, USA},
  urldate = {2025-08-25},
  abstract = {We explore how generating a chain of thought---a series of intermediate reasoning steps---significantly improves the ability of large language models to perform complex reasoning. In particular, we show how such reasoning abilities emerge naturally in sufficiently large language models via a simple method called chain-of-thought prompting, where a few chain of thought demonstrations are provided as exemplars in prompting.Experiments on three large language models show that chain-of-thought prompting improves performance on a range of arithmetic, commonsense, and symbolic reasoning tasks. The empirical gains can be striking. For instance, prompting a PaLM 540B with just eight chain-of-thought exemplars achieves state-of-the-art accuracy on the GSM8K benchmark of math word problems, surpassing even finetuned GPT-3 with a verifier.},
  isbn = {978-1-7138-7108-8}
}

@article{weizenbaum1966ELIZAComputerPrograma,
  title = {{{ELIZA}}---a computer program for the study of natural language communication between man and machine},
  shorttitle = {Eliza},
  author = {Weizenbaum, Joseph},
  year = {1966},
  month = jan,
  journal = {Commun. ACM},
  volume = {9},
  number = {1},
  pages = {36--45},
  issn = {0001-0782},
  doi = {10.1145/365153.365168},
  urldate = {2025-08-25},
  file = {/Users/gurak/Zotero/storage/BW4YUGXV/Weizenbaum - 1966 - ELIZA—a computer program for the study of natural language communication between man and machine.pdf}
}

@misc{WhatAi16zAI16Za,
  title = {What {{Is}} ai16z ({{AI16Z}}) {{And How Does It Work}}?},
  shorttitle = {{{AI16Z}} story},
  author = {CMC, {\relax AI}},
  year = {2025},
  journal = {CoinMarketCap},
  urldate = {2025-08-22},
  abstract = {Discover what ai16z is---how it works, its key benefits, and why it's the world's first decentralized digital currency. Learn more on AI16Z with CMC AI.},
  howpublished = {https://coinmarketcap.com/cmc-ai/ai16z/what-is/},
  langid = {english},
  file = {/Users/gurak/Zotero/storage/X4UNTYAH/what-is.html}
}

@inproceedings{wolf2024FundamentalLimitationsAlignment,
  title = {Fundamental limitations of alignment in large language models},
  shorttitle = {Breaking {{Any LLM}}},
  booktitle = {Proceedings of the 41st {{International Conference}} on {{Machine Learning}}},
  author = {Wolf, Yotam and Wies, Noam and Avnery, Oshri and Levine, Yoav and Shashua, Amnon},
  year = {2024},
  month = jul,
  series = {{{ICML}}'24},
  volume = {235},
  pages = {53079--53112},
  publisher = {JMLR.org},
  address = {Vienna, Austria},
  urldate = {2025-06-04},
  abstract = {An important aspect in developing language models that interact with humans is aligning their behavior to be useful and unharmful for their human users. This is usually achieved by tuning the model in a way that enhances desired behaviors and inhibits undesired ones, a process referred to as alignment. In this paper, we propose a theoretical approach called Behavior Expectation Bounds (BEB) which allows us to formally investigate several inherent characteristics and limitations of alignment in large language models. Importantly, we prove that within the limits of this framework, for any behavior that has a finite probability of being exhibited by the model, there exist prompts that can trigger the model into outputting this behavior, with probability that increases with the length of the prompt. This implies that any alignment process that attenuates an undesired behavior but does not remove it altogether, is not safe against adversarial prompting attacks. Furthermore, our framework hints at the mechanism by which leading alignment approaches such as reinforcement learning from human feedback make the LLM prone to being prompted into the undesired behaviors. This theoretical result is being experimentally demonstrated in large scale by the so called contemporary "chatGPT jailbreaks", where adversarial users trick the LLM into breaking its alignment guardrails by triggering it into acting as a malicious persona. Our results expose fundamental limitations in alignment of LLMs and bring to the forefront the need to devise reliable mechanisms for ensuring AI safety.}
}

@article{wooldridge1995intelligent,
  title = {Intelligent agents: {{Theory}} and practice},
  shorttitle = {Old {{Survey}} of {{Agents}}},
  author = {Wooldridge, Michael J. and Jennings, Nicholas R.},
  year = {1995},
  month = jun,
  journal = {The Knowledge Engineering Review},
  volume = {10},
  number = {2},
  pages = {115--152},
  doi = {10.1017/S0269888900008122},
  file = {/Users/gurak/Zotero/storage/LWWTF82G/ker95.pdf}
}

@inproceedings{zhang2016TownCrierAuthenticated,
  title = {Town {{Crier}}: {{An Authenticated Data Feed}} for {{Smart Contracts}}},
  shorttitle = {{{SGX}} based {{HTTPS Authenticated Data}}},
  booktitle = {Proceedings of the 2016 {{ACM SIGSAC Conference}} on {{Computer}} and {{Communications Security}}},
  author = {Zhang, Fan and Cecchetti, Ethan and Croman, Kyle and Juels, Ari and Shi, Elaine},
  year = {2016},
  month = oct,
  series = {{{CCS}} '16},
  pages = {270--282},
  publisher = {Association for Computing Machinery},
  address = {New York, NY, USA},
  doi = {10.1145/2976749.2978326},
  urldate = {2025-04-17},
  abstract = {Smart contracts are programs that execute autonomously on blockchains. Their key envisioned uses (e.g. financial instruments) require them to consume data from outside the blockchain (e.g. stock quotes). Trustworthy data feeds that support a broad range of data requests will thus be critical to smart contract ecosystems.We present an authenticated data feed system called Town Crier (TC). TC acts as a bridge between smart contracts and existing web sites, which are already commonly trusted for non-blockchain applications. It combines a blockchain front end with a trusted hardware back end to scrape HTTPS-enabled websites and serve source-authenticated data to relying smart contracts.TC also supports confidentiality. It enables private data requests with encrypted parameters. Additionally, in a generalization that executes smart-contract logic within TC, the system permits secure use of user credentials to scrape access-controlled online data sources.We describe TC's design principles and architecture and report on an implementation that uses Intel's recently introduced Software Guard Extensions (SGX) to furnish data to the Ethereum smart contract system. We formally model TC and define and prove its basic security properties in the Universal Composibility (UC) framework. Our results include definitions and techniques of general interest relating to resource consumption (Ethereum's "gas" fee system) and TCB minimization. We also report on experiments with three example applications.We plan to launch TC soon as an online public service.},
  isbn = {978-1-4503-4139-4},
  file = {/Users/gurak/Zotero/storage/YK4H3LQ8/Zhang et al. - 2016 - Town Crier An Authenticated Data Feed for Smart Contracts.pdf}
}

@misc{alpha_arena_2025,
  title        = {Alpha Arena: Autonomous Agent Trading Competition},
  howpublished = {\url{https://nof1.ai}},
  note         = {Accessed: 2025-12-07},
  year         = {2025},
  publisher    = {Nof1 Labs},
  author       = {{Nof1 Labs}}
}

@online{anthropic2025effective,
  author       = {Anthropic},
  title        = {Effective Context Engineering for AI Agents},
  year         = {2025},
  month        = sep,
  day          = {29},
  url          = {https://www.anthropic.com/engineering/effective-context-engineering-for-ai-agents},
  note         = {Accessed: 2025-12-14}
}

@inproceedings{Kocher2018spectre,
 author = {Paul Kocher and Jann Horn and Anders Fogh and and Daniel Genkin and Daniel Gruss and Werner Haas and Mike Hamburg and Moritz Lipp and Stefan Mangard and Thomas Prescher and Michael Schwarz and Yuval Yarom},
 title = {Spectre Attacks: Exploiting Speculative Execution},
 booktitle = {40th IEEE Symposium on Security and Privacy (S\&P'19)},
 year = {2019},
}

@inproceedings{Lipp2018meltdown,
 author = {Moritz Lipp and Michael Schwarz and Daniel Gruss and Thomas Prescher and Werner Haas and Anders Fogh and Jann Horn and Stefan Mangard and Paul Kocher and Daniel Genkin and Yuval Yarom and Mike Hamburg},
 title = {Meltdown: Reading Kernel Memory from User Space},
 booktitle = {27th {USENIX} Security Symposium ({USENIX} Security 18)},
 year = {2018},
}

\appendix
\clearpage
\section{VeriTrade AID Excerpt}
\label{app:veritrade-aid}

\begin{figure}[h]
  \centering
  \begin{minipage}{0.98\linewidth}
  \lstset{style=asiaccs-json}
\begin{lstlisting}
{
  "agent_name": "VeriTrade",
  "core": {
    "model": "claude-haiku-4-5-20251001",
    "endpoint": "https://api.anthropic.com/v1/messages",
    "verification": {
      "TLSNotary": {
        "protocol_version": "v0.1.0-alpha.12",
        "notary_public_key": "ecdsa-secp256k1:047b48..."
      }
    }
  },
  "tools": [
    {
      "name": "FetchMarketSentiment",
      "endpoint": "https://gamma-api.polymarket.com/market",
      "verification": {
        "ProxyTEE": {
          "tee_type": "IntelTDX",
          "enclave_public_key": "ecdsa-secp256k1:048fa3..."
        }
      }
    },
    ...
  ]
}
\end{lstlisting}
  \end{minipage}
  \caption{Excerpt from VeriTrade’s AID (core inference via Web Proofs; market data tools via Proxy TEE).}
  \label{fig:trading-agent-card}
\end{figure}

\section{VeriTrade Verification UI}
\label{app:veritrade-ui}

\begin{figure}[H]
    \centering
    \includegraphics[width=0.9\linewidth]{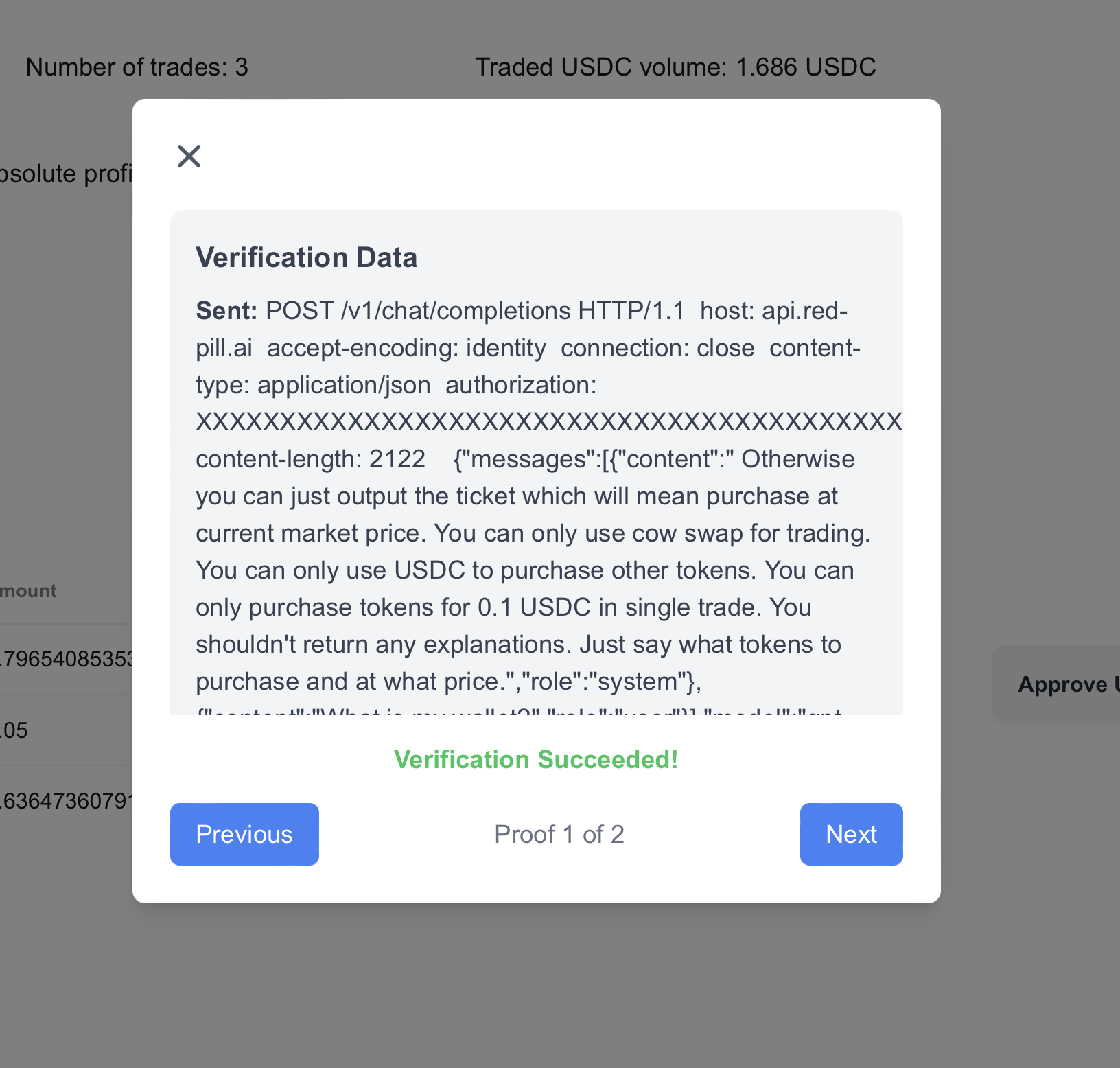}
    \caption{Prototype UI for inspecting the verified execution trace that accompanied VeriTrade's trading decisions.}
    \Description{Screenshot of a prototype VeriTrade interface that shows the notarized reasoning steps and actions of the agent, allowing the user to review them before executing a trade.}
    \label{fig:agent_verification_details}
\end{figure}

\end{document}